\documentclass[11pt,oneside]{amsart}
\title[Asymptotically-informed NNs for implied volatility computation]{Asymptotically-informed neural networks for Black-Scholes implied volatility computation}
\date{}

\author{Samira Amiriyan}
\address[Samira Amiriyan]{Institute for Financial and Actuarial Mathematics\\
Department of Mathematical Sciences\\ 
The University of Liverpool\\
Mathematical Sciences Building\\
Peach Street. \\
Liverpool, L69 7ZL. United Kingdom}
\email{Samira.Amiriyan@liverpool.ac.uk}

\author{Youness Boutaib}
\address[Youness Boutaib]{Institute for Financial and Actuarial Mathematics\\
Department of Mathematical Sciences\\ 
The University of Liverpool\\
Mathematical Sciences Building\\
Peach Street. \\
Liverpool, L69 7ZL. United Kingdom}
\email{Youness.Boutaib@liverpool.ac.uk}

\usepackage{amssymb} 
\usepackage{amsmath} 
\usepackage{amsthm} 
\usepackage[margin=1in]{geometry}  
\usepackage[usenames,dvipsnames]{xcolor} 
\usepackage{float} 
\usepackage{graphicx} 
\usepackage{hyperref}
\usepackage{subcaption} 
\usepackage{adjustbox} 

\usepackage{multirow}
\usepackage{tabularx} 

\usepackage{booktabs}
\usepackage{makecell}

\usepackage[english]{babel}

\newtheorem{theo}{Theorem}[section]
\newtheorem{prop}[theo]{Proposition}

\newcommand{\lbf}{\mathbf{l}}
\newcommand{\ubf}{\mathbf{u}}

\newcommand{\Nbb}{\mathbb{N}}

\newcommand{\Rbb}{\mathbb{R}}

\newcommand{\crm}{\mathrm{c}}
\newcommand{\drm}{\mathrm{d}}
\newcommand{\lrm}{\mathrm{l}}
\newcommand{\urm}{\mathrm{u}}

\newcommand{\inti}[2]{[\![ #1, #2 ]\!]}

\newcommand{\rom}[1]{\uppercase\expandafter{\romannumeral #1\relax}}

\DeclareMathOperator{\erfcx}{erfcx}

\usepackage[english]{babel}
\keywords{Black-Scholes, implied volatility, neural networks, mixture-of-experts architectures, financial machine learning}
\subjclass{Primary: 91G60 ; Secondary: 68T07, 65D15}
\begin{document}	
\maketitle
	\begin{abstract}
	The computation of Black-Scholes implied volatility is a fundamental task in quantitative finance, underpinning option valuation, model calibration and risk management. Although implied volatility is routinely used in practice, the inversion of the Black-Scholes pricing formula remains a challenging numerical problem, particularly in asymptotic regimes corresponding to extreme option prices, strikes or maturities, where the inverse map becomes highly sensitive to perturbations of the price. In this paper, we introduce a new family of asymptotically-informed neural-network architectures for implied-volatility computation. Exploiting the distinct behaviours of the Black-Scholes pricing function in different volatility regimes, we propose a family of architectures that learn a trainable partition of the price--log-moneyness domain through a system of gating functions and combines specialised local approximations of the implied-volatility function within each region. Extensive numerical experiments demonstrate that the proposed models consistently outperform standard feed-forward neural networks across a wide range of parameter domains, often by several orders of magnitude in relative accuracy while maintaining excellent generalisation properties. Furthermore, the neural-network outputs provide highly accurate initial guesses for a third-order Householder scheme, allowing near machine-precision implied-volatility computations after only two refinement iterations.
	\end{abstract}
\section{Introduction}
	Since its introduction by Black and Scholes \cite{BS} and its subsequent extension by Merton \cite{Merton}, the Black-Scholes model has become one of the cornerstones of modern quantitative finance. Given a spot price $S_0$, a strike $K$, a maturity $T$, a risk-free rate $r$ and a volatility parameter $\sigma$, the model provides a closed-form formula for the price $C(S_0,K,T,r,\sigma)$ of a European option. Among these parameters, volatility occupies a unique position. Indeed, while quantities such as the strike, maturity or interest rate are directly observable, volatility is not. Instead, it is inferred from market prices through the inversion of the Black-Scholes formula. More precisely, because the Black-Scholes call price is a strictly increasing function of $\sigma$, every arbitrage-free option price corresponds to a unique volatility level. This quantity, known as the \emph{implied volatility}, is defined as the unique solution $\sigma$ of 
	\[ C(S_0,K,T,r,\sigma)=C, \] 
	where $C$ denotes the observed market price of the option. 
	
	Although the Black-Scholes model assumes a constant volatility parameter, empirical observations reveal that implied volatilities vary systematically across strikes and maturities. This phenomenon gives rise to the well-known implied volatility surface, whose shape plays a central role in derivative pricing, risk management and model calibration. As a consequence, market practitioners often quote options in terms of implied volatilities rather than prices.
	
	The practical importance of implied volatility extends well beyond the Black-Scholes framework itself. First, implied volatilities provide a common scale on which options with different strikes, maturities and underlying prices can be compared. Second, portfolio hedging strategies rely heavily on volatility sensitivities such as Vega, whose computation requires accurate implied-volatility estimates. Third, many modern financial models are calibrated not directly to prices but to implied-volatility surfaces. A prominent example is Dupire's local-volatility framework \cite{Dupire}, where local volatility is recovered from derivatives of the implied-volatility surface. Consequently, inaccuracies in implied-volatility computations may propagate through an entire pricing, hedging or calibration pipeline.
	
	This issue becomes particularly acute in the \emph{asymptotic regimes} of very small or very large option prices. 
	Equivalently, these are regions where the option is either deeply in-the-money or deeply out-of-the-money, or where the time-to-maturity is very small or very large.
	In such situations, the implied-volatility function becomes highly sensitive to perturbations of the price. 
	As a consequence, even small pricing errors may induce large implied-volatility errors, potentially leading to misleading assessments of relative value, incorrect hedging decisions or inaccurate calibration results. From a theoretical perspective, the behaviour of implied volatility in these problematic asymptotic regimes has been extensively studied \cite{Tehranchi, Tehranchi2, Gulisashvili, AL}. However, many asymptotic formulae become accurate only for strike and maturity values that are far beyond those observed in practice \cite{AL}. Consequently, while mathematically insightful, these formulas are often insufficient on their own for the construction of robust implied-volatility solvers operating over realistic ranges of parameter values.
	
	The issue of accurately computing the implied volatility in the asymptotic regimes is not merely academic and remains relevant even when the Black-Scholes model itself is not used as the final pricing model. In practice, sophisticated stochastic or rough volatility models are often employed to generate option prices, either through semi-closed formulas or Monte Carlo schemes. These prices are then converted into implied volatilities before any meaningful comparison with market data is carried out. Consequently, a highly accurate pricing methodology becomes of limited practical value if the final implied-volatility inversion introduces significant errors. In particular, two competing parameter sets of a stochastic-volatility model may produce almost identical prices while differing substantially in the tails of the implied-volatility surface. Thus, a small error in the inversion procedure may distort the calibration process itself. Moreover, optimisation algorithms used for calibration may explore regions of the parameter space corresponding to implied volatilities that are far from those directly observed in market data. Therefore, the calibration procedure requires accurate implied-volatility computations not only in typical market conditions, but also in extreme regimes visited during the optimisation process. In summary, and regarding calibration, inaccuracies in the implied volatility in asymptotic regions may lead to incorrect parameter estimates, poor out-of-sample performance or misleading conclusions regarding the relative quality of competing models.
	
	Finally, accurate implied-volatility computations are essential for risk management. The low-volatility regime is particularly sensitive because the option price varies rapidly with respect to the volatility parameter in relative terms. Consequently, small errors in implied volatility may lead to significant errors in volatility sensitivities and therefore in hedging strategies. In extreme cases, numerical inaccuracies may even create the illusion of relative-value or arbitrage opportunities that are not actually present in the market.
	
	Despite the importance of implied volatility, no explicit closed-form inverse of the Black-Scholes formula is known. As a result, implied-volatility computation has generated a large literature spanning numerical root-finding methods, analytical approximations and, more recently, machine-learning approaches.
	
	A first class of methods formulates implied-volatility computation as a root-finding problem. Early work in \cite{MK} proposed the Newton-Raphson method, while Brent's algorithm \cite{Brent} subsequently emerged as a popular alternative due to its robustness. The performance of such methods depends critically on the quality of the initial guess and may deteriorate in extreme parameter regimes.
	
	A second class of methods seeks analytical approximations of the implied-volatility function itself. Representative examples include \cite{BS2, BCS, Chance, CM, CN, Li}. Among these approaches, the method of J\"ackel \cite{Jackel} is generally regarded as one of the most accurate and robust available techniques. A key insight of \cite{Jackel} is that the Black-Scholes price exhibits qualitatively different behaviours depending on the volatility regime. By partitioning the parameter space and constructing specialised approximations in each region, J\"ackel obtains highly accurate initial guesses which are then refined through a high-order root-finding scheme. Inspired by this idea, a global approximation based on Chebyshev interpolation was later proposed in \cite{GHMP}, trading some accuracy for computational simplicity.
	
	In parallel, the last decade has witnessed a rapid expansion of machine-learning methods in quantitative finance. Neural networks have been employed in option pricing \cite{GG, HLP, FG}, volatility-surface modelling \cite{ATV, VC, NZW}, model calibration \cite{Hernandez, BHMST, HMT, CKT}, and the numerical solution of financial partial differential equations \cite{SS, SS2}. Comprehensive surveys such as \cite{RW} document the significant impact that deep-learning techniques have had on financial engineering.
	
	Given their universal approximation properties \cite{Cybenko, Hornik}, neural networks constitute a natural candidate for learning the implied-volatility map directly. Several studies have explored this direction. For example, \cite{LOB} proposed transformations of the option price aimed at improving approximation quality in difficult regions, while  \cite{DVPP} investigated a variety of neural architectures, including residual networks, highway networks and Deep Galerkin Method (\cite{SS}) networks. Their results reveal an important limitation: beyond a certain point, increasing network size does not necessarily improve accuracy and may even worsen generalisation performance. Moreover, the more sophisticated architectures do not overcome the same bottleneck precision encountered by very large standard feed-forward neural networks, their main advantage being only their much smaller training time.
	
	The present work is motivated by the observation that existing neural-network approaches largely treat implied-volatility inversion as a generic regression problem, while the implied-volatility function possesses a rich analytical structure that is already well understood through the work of Jäckel \cite{Jackel}, Tehranchi \cite{Tehranchi, Tehranchi2} and others. In particular, different regions of the input domain exhibit markedly different behaviours, especially in the asymptotic regimes corresponding to very small and very large option prices. We argue that a successful neural-network architecture should explicitly incorporate this information rather than rely solely on optimisation to discover it.
	
	The main contribution of this paper is therefore the introduction of a new family of \emph{asymptotically-informed neural-network architectures} for implied-volatility computation. Inspired by the domain partitioning analysis of \cite{Jackel}, we construct neural networks that learn a trainable partition of the price--log-moneyness domain and combine specialised local approximations of the implied-volatility function through a system of gating functions. The resulting architecture may be interpreted as a problem-specific mixture-of-experts model \cite{JJNH}, whose structure is directly derived from the analytical properties of the Black-Scholes formula.
	
	More specifically, the contributions of this paper are as follows:
	\begin{enumerate} 
	\item We design a new family of regime-aware neural architectures based on the asymptotic structure of the Black-Scholes pricing function. 
	\item We introduce trainable gating mechanisms that automatically learn the transition between asymptotic and central volatility regimes. 
	\item We demonstrate numerically that the proposed architectures consistently outperform standard feed-forward neural networks across a wide range of parameter domains, often by several orders of magnitude in relative accuracy. Thus, we provide empirical evidence that incorporating analytical knowledge into neural-network design yields substantially better accuracy and generalisation than treating implied-volatility inversion as a generic regression problem. 
	\item We show that the outputs of the proposed networks provide exceptionally accurate initial guesses for a third-order Householder scheme, allowing extremely high-precision implied-volatility computations after only a small number of iterations. 
	\item Unlike region-by-region rational approximations, the proposed approach scales naturally with model complexity and desired accuracy. Improving the approximation merely requires enlarging the training dataset or increasing the size of the neural network. Once trained, evaluations reduce essentially to matrix multiplications and pointwise non-linearities, operations that are highly optimised on modern hardware and naturally parallelisable. In contrast, increasing the accuracy of analytical approximation methods generally requires deriving new approximation formulas and evaluating increasingly complex expressions for each parameter regime separately.
	\end{enumerate}
	
	The remainder of the paper is organised as follows. In Section~\ref{sec:BSAnalysis}, we review several analytical properties of the Black-Scholes pricing function and derive the domain partition that motivates our architectures. Section~\ref{sec:Architectures} introduces the proposed neural-network constructions and discusses the different classes of local approximations and gating functions. Numerical experiments and comparisons with existing approaches are presented in Section~\ref{sec:NumericalExp}. Finally, Section~\ref{sec:Conclusion} concludes and discusses possible extensions of the methodology. Technical details concerning the numerically stable implementation of the Black-Scholes formula and the Householder refinement scheme are deferred to the appendices.
\section{Review of the analysis of the Black and Scholes formula}\label{sec:BSAnalysis}
	\subsection{Mathematical statement of the problem}
	In this subsection, we recall some properties of the Black-Scholes pricing formula that inspire the design of the neural network architectures developed in this work. Recall that the Black-Scholes formula for the price of a European call option with spot price $S_0$, strike $K$, time to maturity $T$, risk-free rate $r$ and volatility $\sigma$ is given by
	\[
	C(S_0,K,T,r,\sigma)=
	\Phi(d_1)S_0-\Phi(d_2)Ke^{-rT},
	\]
	where
	\[
	d_1=\frac{\log\left(\frac{S_0}{K}\right)+(r+\frac{\sigma^2}{2})T}{\sigma \sqrt{T}}
	\quad \text{and} \quad
	d_2=\frac{\log\left(\frac{S_0}{K}\right)+(r-\frac{\sigma^2}{2})T}{\sigma \sqrt{T}},
	\]
	and $\Phi$ denotes the standard Gaussian cumulative distribution function. This formula admits the following compact normalised representation, used for instance in \cite{Tehranchi, Tehranchi2}
	\begin{equation}\label{eq:BSpriceCompact}
	C_{\mathrm{BS}}(A,B)=
	\frac{C(S_0,K,T,r,\sigma)}{S_0}
	=\Phi\left(-\frac{A}{B}+\frac{B}{2}\right)
	-\Phi\left(-\frac{A}{B}-\frac{B}{2}\right)e^A,	
	\end{equation}
	where
	\[
	A= \log\left(\frac{Ke^{-rT}}{S_0} \right)
	\quad \text{and} \quad
	B= \sigma \sqrt{T}.
	\]
	$A$ is called the \emph{log-moneyness} and $B$ the \emph{total volatility} (or sometimes the integrated volatility). With these new notations and variables, and thinking of $B$ as a proxy for the volatility term $\sigma$, the implied volatility problem consists of determining the unique value $B^\ast>0$ of $B$ that, given an observed market price $C$ with a corresponding value of $A$ (computed using the option's parameters $S_0$, $K$,$T$ and $r$), solves the equation 
	\begin{equation}\label{eq:BinvDef}
	C_{\mathrm{BS}}(A,B)= \frac{C}{S_0}.
	\end{equation}
	Accordingly, we define the inversion map $B_{\mathrm{inv}}:(A,C)\longmapsto B^\ast$, which associates a log-moneyness and a normalised call price to the corresponding total volatility. 
	
	A useful symmetry relation follows directly from \eqref{eq:BSpriceCompact}
	\[ C_{\mathrm{BS}}(A,B) = e^A C_{\mathrm{BS}}(-A,B) + 1-e^A. \] 
	Consequently, it is sufficient to restrict our attention to the half-plane $A\geq 0$. In financial terms, this corresponds to considering out-of-the-money call options. 
	
	For ease of comparison with the seminal work \cite{Jackel}, we also recall the alternative normalisation used therein (but also in other works such as \cite{Li}):
	\[
	C_{\mathrm{J}}(\widetilde{A},B)=
	\frac{C(S_0,K,T,r,\sigma)}{\sqrt{S_0Ke^{-rT}}}
	=\Phi\left(\frac{\widetilde{A}}{B}+\frac{B}{2}\right)e^{\frac{\widetilde{A}}{2}}
	-\Phi\left(\frac{\widetilde{A}}{B}-\frac{B}{2}\right)e^{-\frac{\widetilde{A}}{2}},
	\]
	where $\widetilde{A}=-A= -\log\left(\frac{Ke^{-rT}}{S_0} \right)$. In particular, 
	$C_{\mathrm{J}}(\widetilde{A},B) = C_{\mathrm{BS}}(A,B) e^{-\frac{A}{2}}$.
	
	In order to simplify the formulae and without loss of generality, we shall henceforth assume that $S_0=1$ and $r=0$.

	From the perspective of neural network implementation, the normalisation $C_{\mathrm{BS}}$ is particularly advantageous. Indeed, for any fixed $A \geq 0$, the function $B \mapsto C_{\mathrm{BS}}(A,B)$ maps $(0,+\infty)$ into $(0,1)$ as
\[
\lim_{B\to 0^+} C_{\mathrm{BS}}(A,B) = 0
\qquad\text{and}\qquad
\lim_{B\to +\infty} C_{\mathrm{BS}}(A,B) = 1.
\]
	This means that normalised call prices are confined to the unit interval $(0,1)$ regardless of $A$. This bounded and parameter-independent range provides a particularly convenient input domain for neural-network training.

	In view of these observations, the implied-volatility problem reduces to approximating the function
\[
B_{\mathrm{inv}} : (A,C)\in [0,\infty) \times (0,1) \longmapsto B^\ast\in (0,\infty),
\]
	which maps a pair consisting of a log-moneyness $A$ and a normalised call price $C$ to the corresponding total volatility $B^\ast$.
	\subsection{Partitioning the price-log-moneyness space}
	In order to contextualise our approach and build intuition towards the proposed solution, we begin by reviewing the framework of \cite{Jackel}, where the aim is to construct an approximation of the inverse map $\widetilde B_{\mathrm{inv}} : (\widetilde A,\widetilde C) \to B$, defined implicitly through 
	\[ \widetilde C = C_{\mathrm J} \!\left( \widetilde A, \widetilde B_{\mathrm{inv}}(\widetilde A,\widetilde C) \right), \] 
	for given $\widetilde{A}\leq 0$ and \[ \widetilde{C} = \frac{C(S_0,K,T,r,\sigma)} {\sqrt{S_0Ke^{-rT}}} \in \left(0,e^{\widetilde A/2}\right). \]
	
	A key observation underlying most successful implied-volatility algorithms is that, for fixed log-moneyness, the Black-Scholes pricing function exhibits qualitatively different behaviours depending on the magnitude of the total volatility. For fixed $\widetilde A$, the function $B \mapsto C_{\mathrm J}(\widetilde A,B)$ is nearly flat for very small and very large values of $B$, while exhibiting an approximately linear behaviour in a neighbourhood of its inflection point. As a consequence, it is often advantageous to approximate the inverse map $\widetilde B_{\mathrm{inv}}$ separately in different regions of the $(\widetilde{A}, \widetilde{C})$-parameter space. More precisely, the $(\widetilde{A},B)$-plane (for $\widetilde{A}\leq 0$) is first partitioned into three regions of the form
	\begin{equation}\label{eq:3_regions_J}
	B\leq \widetilde{B}_{\lrm}(\widetilde{A})
	\;\;,\;\;
	\widetilde{B}_{\lrm}(\widetilde{A})\leq B\leq \widetilde{B}_{\urm}(\widetilde{A})
	\;\;\text{and}\;\;
	B \geq \widetilde{B}_{\urm}(\widetilde{A}),
	\end{equation}
	In \cite{Jackel}, the boundary functions $\widetilde B_{\mathrm{lrm}}$ and $\widetilde B_{\mathrm{urm}}$ are constructed using the tangent to the curve $B\mapsto C_{\mathrm J}(\widetilde A,B)$ at its inflection point $\widetilde{B}_{\crm}:=\sqrt{2|\widetilde{A}|}$with the asymptotic levels $\widetilde{C}=0$ and $\widetilde{C}=e^{\frac{\widetilde{A}}{2}}$ respectively. To avoid the repeated evaluation of the Black-Scholes function $C_{\mathrm{J}}$ (which can be computationally expensive and numerically unstable), \cite{GHMP} instead takes
$\widetilde{B}_{\lrm}$ and $\widetilde{B}_{\urm}$ to be affine functions of $\widetilde{A}$. This simplification yields a satisfactory approximation to the construction of \cite{Jackel} over a moderate range of log-moneyness values, but inevitably deteriorates for large $|\widetilde A|$ given the sublinear growth of the inflection point map $\widetilde{A} \mapsto \sqrt{2|\widetilde{A}|}$.

	Once the partition \eqref{eq:3_regions_J} has been constructed, different approximation techniques may be employed in each region. For example, \cite{Jackel} uses rational function approximations tailored to each regime (specifically, the rational cubic Hermite interpolants of \cite{DG}) whereas \cite{GHMP} constructs a single uniform approximation over the full $(\widetilde{A},\widetilde{C})$ domain using Chebyshev polynomials.	
	
	We now translate this framework in terms of the normalisation $C_{\mathrm{BS}}$ introduced in the previous subsection. Working in the domain $(A,C) \in [0,+\infty) \times (0,1)$, we apply the same geometric idea (partitioning via the tangent at the inflection point) to the function $B \mapsto C_{\mathrm{BS}}(A,B)$ for fixed $A \geq 0$. As in the $C_{\mathrm{J}}$ setting, this map exhibits an approximately linear behaviour near its inflection point $B_{\crm} := \sqrt{2A}$, and flat asymptotics at $B \to 0^+$ and $B \to +\infty$. This behaviour is illustrated in Figure~\ref{Fig:3_regions}-(A). The corresponding inverse function is shown in Figure~\ref{Fig:3_regions}-(B). Observe that the inverse becomes particularly steep near the two asymptotic regimes, reflecting the fact that a very small perturbation in the option price may result in a large variation in the implied volatility. These regions therefore represent the most challenging part of the inversion problem.
	
	\begin{figure}[h!]
  \centering
  \begin{subfigure}{0.48\textwidth}
    \centering
    \adjustbox{valign=c}{\includegraphics[width=\linewidth]{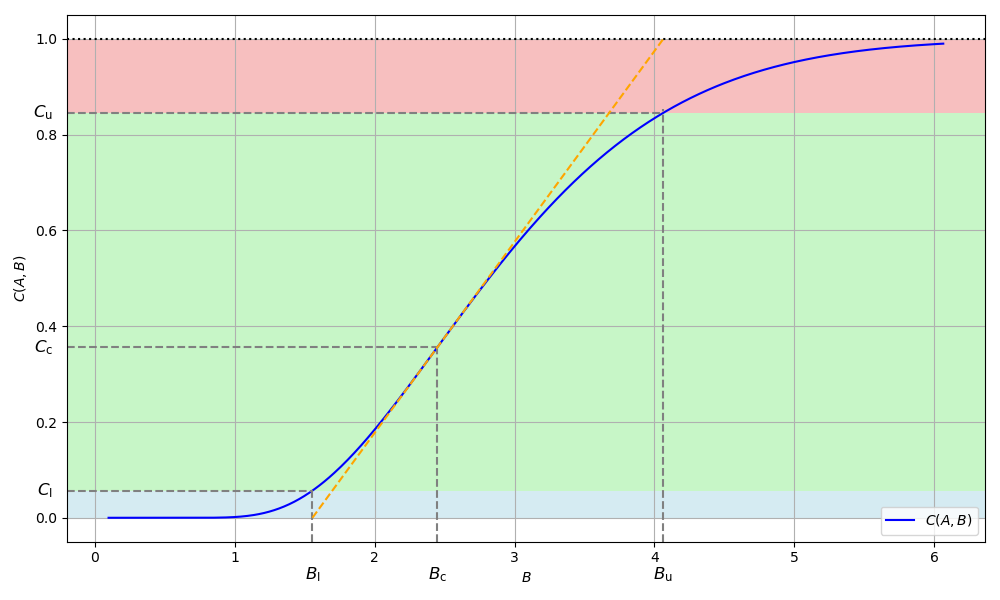}}
    \caption{ }
  \end{subfigure}
  \hfill
  \begin{subfigure}{0.48\textwidth}
    \centering
    \adjustbox{valign=c}{\includegraphics[width=\linewidth]{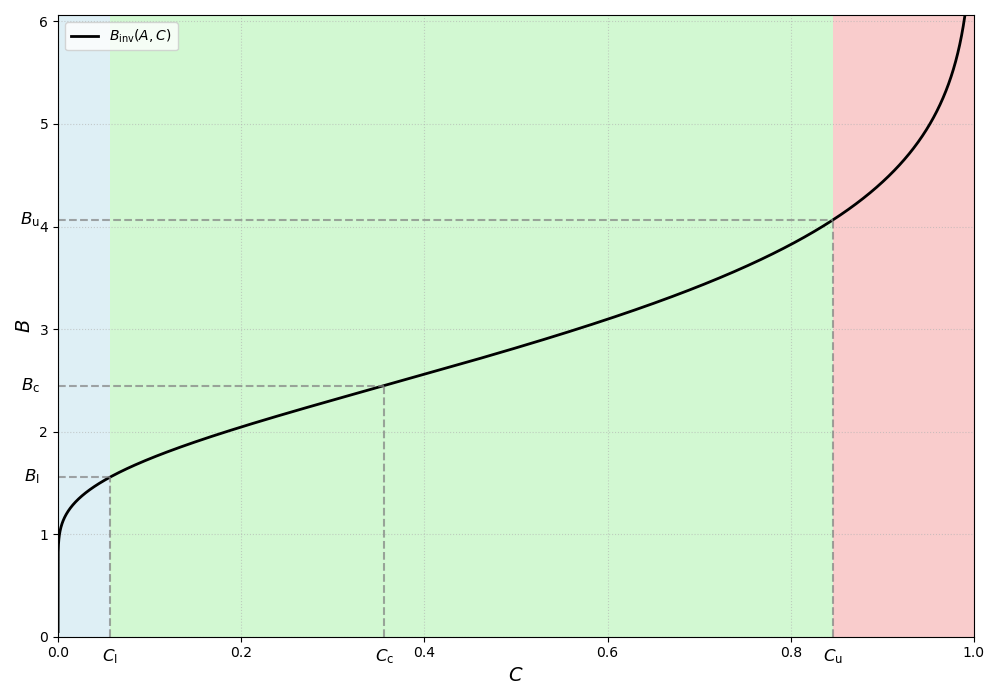}}
    \caption{ }
  \end{subfigure}
    \caption{(A): The partition of the space $(B,C)$ into three regions defined by the tangent to the curve $C_{\mathrm{BS}}(A,\cdot)$ at its inflection point, for $A=3$. (B): Inverse-volatility map $B_{\mathrm{inv}}(A,\dot)$ for $A=3$, and the corresponding partition of the $(C,B)$-domain into three regions defined by the tangent to the curve $C_{\mathrm{BS}}(A,\cdot)$ at its inflection point.}
    \label{Fig:3_regions}
	\end{figure}

%
	
	We now derive the boundary functions of the partition explicitly. For fixed $A \geq 0$, a direct computation gives the normalised price at the inflection point $B_{\crm} = \sqrt{2A}$ as
		\[
C_{\mathrm{BS}}\!\left(A,\sqrt{2A}\right)
= \tfrac{1}{2} - e^A\,\Phi\!\left(-\sqrt{2A}\right).
	\]
	The limiting behaviour as $A$ varies is readily obtained
	\[
	\lim_{A\to 0^+}C_{\mathrm{BS}}(A,\sqrt{2A})
	=0
	\quad \textrm{and}\quad
	\lim_{A\to + \infty}C_{\mathrm{BS}}(A,\sqrt{2A})
	=\frac{1}{2},
	\]
	where the second limit can be obtained for example using classical bounds for $\Phi$ of the type (see Theorem \ref{theo:MillsAsymptote})
	\begin{equation}\label{eq:GaussCumulIneq}
	\forall t>0\colon \quad
	\left( \frac{1}{t}-\frac{1}{t^3} \right) \varphi\left(t\right)
	\leq \Phi(-t) \leq 
	\frac{1}{t}  \varphi\left(t\right),	
	\end{equation}
	with $\varphi$ denotes the standard Gaussian density function.
	
	One further observes that the slope of $B \mapsto C_{\mathrm{BS}}(A,B)$ at its inflection point equals $1/\sqrt{2\pi}$, independently of $A$ (a particularly convenient feature of the $C_{\mathrm{BS}}$ normalisation). Denoting $B_{\urm}(A)$ the point of intersection of the asymptotic to $B\mapsto C_{\mathrm{BS}}(A,B)$ at its inflection point $\sqrt{2A}$ with the value $C_{\mathrm{BS}}=1$, then
	\[
		B_{\urm}(A) = \sqrt{2A} + \sqrt{\frac{\pi}{2}}
			+ \sqrt{2\pi} \Phi(-\sqrt{2A})e^A.
	\]
	In order to have a basic idea on the growth of the width of the asymptotic region $B\geq B_{\urm}(A)$, or, equivalently, $C\geq C_{\mathrm{BS}}(A,B_{\urm}(A))$, let us estimate (for small and large $A$) the value $C_{\mathrm{BS}}(A,B_{\urm}(A))$. One can easily show
	\[
	\lim_{A\to 0^+} C_{\mathrm{BS}}(A,B_{\urm}(A))
	= \Phi\left(\sqrt{\frac{\pi}{2}}\right)- \Phi\left(-\sqrt{\frac{\pi}{2}}\right)
	\simeq 0.7888,
	\]
	and that
	\[
	\lim_{A\to + \infty}C_{\mathrm{BS}}(A,B_{\urm}(A))= \Phi\left(\sqrt{\frac{\pi}{2}}\right)\simeq 0.895.
	\]
	Similarly, denoting $B_{\lrm}(A)$ the point of intersection of the tangent to $B\mapsto C_{\mathrm{BS}}(A,B)$ at its inflection point $\sqrt{2A}$ with the value $C_{\mathrm{BS}}=0$, then
	\[
	B_{\lrm}(A) = \sqrt{2A} - \sqrt{\frac{\pi}{2}}
			+ \sqrt{2\pi} \Phi(-\sqrt{2A})e^A.
	\]
	As before, we get
	\[
	\lim_{A\to 0^+} C_{\mathrm{BS}}(A,B_{\lrm}(A))
	=0,
	\]
	and
	\[
	\lim_{A\to +\infty} C_{\mathrm{BS}}(A,B_{\lrm}(A))
	= \Phi\left(-\sqrt{\frac{\pi}{2}}\right)\simeq 0.105.
	\]
	
	Altogether, the above analysis defines a natural partition of the $(A,C)$-domain into three regions (illustrated in Figure~\ref{Fig:3_regions_asymptotic}) which are characterised by the inequalities
	\begin{equation}\label{eq:3_regions}
	B\leq B_{\lrm}(A)
	\;\;,\;\;
	B_{\lrm}(A)\leq B\leq B_{\urm}(A)
	\;\;\text{and}\;\;
	B \geq B_{\urm}(A),
	\end{equation}

	\begin{figure}[h!]
	\includegraphics[scale=0.4]{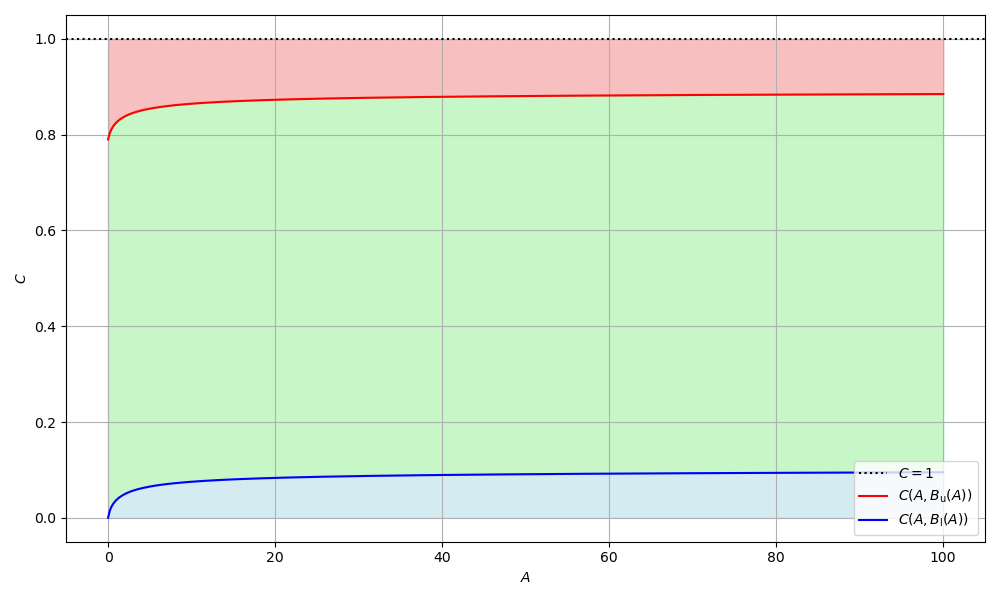}
	\centering
	\caption{Partition of the $(A,C)$-domain induced by the inequalities \eqref{eq:3_regions}. The central region corresponds to the approximately linear part of the pricing function, while the upper and lower regions correspond to the two asymptotic regimes.}
	\label{Fig:3_regions_asymptotic}
	\end{figure}
	
	This partition forms the basis of the neural-network architecture developed in this paper. Rather than learning a single global approximation of the total volatility map $B_{\mathrm{inv}}$, the model learns specialised approximations in each of the three regimes and combines them through a smooth transition mechanism. This allows the network to automatically adapt to the markedly different local behaviours of the implied-volatility function across the domain.
	\section{Architectural design}\label{sec:Architectures}
	In the remainder of this paper, we will use the term implied volatility to refer to the corresponding total volatility $B=\sigma\sqrt{T}$.
	\subsection{General principle}\label{subsec:ArchiGenPrincipal}
	The analysis of the previous section reveals that the implied-volatility function $B_{\mathrm{inv}}:[0,+\infty)\times(0,1)\longrightarrow(0,+\infty)$ exhibits markedly different behaviours in the three regions of the $(A,C)$-domain defined by the inequalities \eqref{eq:3_regions}. In particular, the inverse map becomes highly sensitive to the option price in the asymptotic regimes $C\to0^+$ and $C\to1^-$, whereas its behaviour is considerably more regular in the intermediate region. This observation suggests that learning the implied-volatility function with a single classical neural network may not be the most efficient strategy. Instead, we exploit the geometry of the problem and approximate $B_{\mathrm{inv}}$ by the following weighted combination
	\begin{equation}\label{eq:IVGenFormula}
	\mathrm{IV}(A,C)=f_0(A,C)g_0(A,C)+ f_1(A,C)g_1(A,C) + f_2(A,C)g_2(A,C),	
	\end{equation}	
	where the functions $(g_i)_{0\leq i \leq 2}$ are intended to approximate the implied volatility in different regions of the domain, while the functions $(f_i)_{0\leq i \leq 2}$ determine how these approximations are combined. Depending on the model under consideration, all of these functions may either be prescribed analytically or represented by trainable neural networks. We call $f_0$, $f_1$ and $f_2$ \emph{gating functions}. The role of $f_0$ and $f_1$ is to activate the approximations $g_0$ and $g_1$ in the lower and upper asymptotic regions, respectively. More precisely, we require $f_0$ and $f_1$ to be positive functions bounded by $1$ and satisfying for all $A\geq0$
	\begin{equation}\label{eq:ModFuncCond}
	f_0(A,C)\underset{C \to 0^+}{\longrightarrow} 1
	\;\;,\;\;
	f_0(A,C)\underset{C \to 1^-}{\longrightarrow} 0
	\;\;,\;\;
	f_1(A,C)\underset{C \to 0^+}{\longrightarrow} 0
	\;\;,\;\;
	f_1(A,C)\underset{C \to 1^-}{\longrightarrow} 1.
	\end{equation}
	
	The representation \eqref{eq:IVGenFormula} may be viewed as a problem-specific variant of a mixture-of-experts architecture (\cite{JJNH, Rokach, BCWR}). Unlike classical approaches to implied-volatility approximation, such as the region-wise constructions in \cite{Jackel,LOB}, the partition of the domain is not necessarily imposed a priori. Instead, it emerges during training through the learnt gating functions while still respecting the asymptotic constraints \eqref{eq:ModFuncCond}. In this sense, the architecture incorporates qualitative knowledge of the implied-volatility surface without hard-coding the boundaries of the different regimes.

	We shall investigate two variants of the decomposition \eqref{eq:IVGenFormula}, differing in the treatment of the third gating factor $f_2$.
	
	In the first variant, called interpolation model and labelled by ``\textbf{Inter}'', we impose $f_2:=1-f_0-f_1$. The resulting approximation takes the form
	\begin{equation}\label{eq:InterFormula}
	\textbf{Inter:}\quad 
	\mathrm{IV}(A,C)=f_0(A,C)g_0(A,C)+ f_1(A,C)g_1(A,C) + (1-f_0(A,C)-f_1(A,C))g_2(A,C).	
	\end{equation}
	In this setting, $(f_0,f_1,1-f_0-f_1)$ plays the role of a trainable parametric partition of unity on $\Rbb_+\times (0,1)$ (but it will not be one for all possible values of the parameters.) The approximation $g_2$ is therefore primarily responsible for modelling the implied volatility in the central region of the domain (corresponding roughly to the interval $[C_{\mathrm{BS}}(A,B_{\lbf}(A)),C_{\mathrm{BS}}(A,B_{\ubf}(A))]$ in \cite{Jackel}.)
	
	In the second variant, called free model and labelled by ``\textbf{Free}'', we simply set $f_2:=1$. The approximation then becomes
	\begin{equation}\label{eq:FreeFormula}
	\textbf{Free:}\quad 
	\mathrm{IV}(A,C)=f_0(A,C)g_0(A,C)+ f_1(A,C)g_1(A,C) + g_2(A,C).
	\end{equation}
	Under this formulation, $g_2$ acts as a global baseline approximation over the entire domain, whereas the terms $f_0g_0$ and $f_1g_1$ provide corrections tailored to the lower and upper asymptotic regions. 
	
	It is worth noting that decompositions of the form \eqref{eq:IVGenFormula} have conceptual similarities with several well-established neural-network architectures. For example, Highway Networks \cite{SGS} and Residual Networks \cite{HZRS} also learn representations based on combinations of multiple interacting components. However, in those architectures the decomposition is entirely determined by the optimisation procedure and typically admits no direct interpretation in terms of the underlying problem, thus excluding any a-priori knowledge. In contrast, the decomposition proposed here is motivated by the specific analytical structure of the implied-volatility function. Architectures related to Highway and Residual Networks were considered in the financial context for instance by \cite{DVPP} for implied-volatility approximation and model calibration. As will be seen in the numerical experiments, the incorporation of problem-specific structure allows our approach to achieve substantially higher accuracy while using significantly smaller networks.
	
	A different strategy for incorporating asymptotic information in the context of finance into neural-network approximations was proposed in \cite{AKP}. In this framework, a computationally expensive target function $f$ is decomposed as a sum of functions $S+\hat{f}$, where $S$ is a generalised spline matching a \emph{known} asymptotic of $f$ outside a prescribed learning region, and $\widehat f$ is a trainable linear combination of Gaussian kernels supported within that region. The methodology was applied to the approximation of Black-Scholes prices and SABR implied volatilities (with the assumption of uncorrelated Brownian motions driving the stock price and the volatility processes.) While this approach can be effective when accurate asymptotic formulae are available, it relies on substantial prior analytical knowledge of the target function. In many applications, asymptotic expansions are either unavailable or insufficiently accurate over practically relevant parameter ranges. For example, asymptotic formulae in option pricing are often valid only for extremely large strikes or maturities that lie far outside the region encountered in typical market data \cite{AL}. Moreover, the local nature of Gaussian-kernel corrections may require a large number of kernels to capture complex global features of the target function.
	
	The methodology developed in this paper takes a different perspective. Rather than relying on explicit asymptotic formulae, we exploit only qualitative information about the structure of the implied-volatility map, namely the existence of distinct asymptotic and central regimes. The resulting architecture remains fully data-driven while embedding the most important analytical features of the inversion problem directly into the model design.
	\subsection{Choices for the local implied volatility functions}\label{subsec:LocIVFct}
	Having described the overall architecture in the previous subsection, we now discuss the possible choices for the local approximation functions $g_i$, $i\in \{0,1,2\}$.
	
	The most direct approach is to model each function $g_i$ by a standard feed-forward neural network with scalar output. We refer to this class of models as ``\textbf{Gen}'' (for generic). In this setting, no structural constraint is imposed on the range of the network, and each $g_i$ is free to take arbitrary real values. The positivity of the final implied-volatility approximation is therefore enforced only indirectly through the training process.
	
	Since implied volatilities are strictly positive, it is natural to encode this property directly into the model. A simple way to achieve this is to define each local approximation $g_i$ as the exponential of the output of a feed-forward neural network. We refer to this family as ``\textbf{Exp}'' (for exponential). For the \textbf{Inter} architecture introduced in \eqref{eq:InterFormula}, this positivity constraint is naturally compatible with the intended interpretation of the functions $g_i$ as local implied-volatility approximations. The situation is slightly different for the \textbf{Free} architecture \eqref{eq:FreeFormula}. Recall that in this model the function $g_2$ acts as a global baseline approximation over the entire domain. Since the implied volatility in the lower asymptotic region ($C\to 0^-$) is smaller than in the central region, the correction term associated with this regime must decrease the value of the baseline approximation. Consequently, in the \textbf{FreeExp} model we impose for $g_0$ to be the negative of an exponentiated standard neural network output, which guarantees $g_0\leq 0$. The contribution $f_0g_0$ therefore acts as a negative correction in the lower asymptotic region, while the term $f_1g_1$ provides a positive correction in the upper asymptotic region.
	
	Inspired by \cite{Jackel}'s approach, a third non-obvious choice is to compute, based on asymptotic equivalents of the function $C_{\mathrm{BS}}(A,\cdot)$ at extreme values of $B$, proxies of the implied volatility function (corresponding to $g_0$ and $g_1$ in the setting of the \textbf{Inter} models) that are asymptotically linear in $A$ and $C$, which we can approximate using standard neural networks. Several variants of this idea were implemented and tested. The corresponding models, labelled \textbf{AsptDef} and \textbf{AsptAdv}, are available in the accompanying code repository. However, their empirical performance was consistently inferior to that of the other architectures considered in this paper. We believe that this behaviour can be explained by two main reasons. First, the asymptotic formulae are mathematically well-defined only on small subsets of the $(A,C)$-domain which depend on both variables simultaneously, and which do not naturally extend to globally defined functions on $ [0,+\infty)\times(0,1)$. As a consequence, additional transformations and regularisations are required before they can be incorporated into a neural-network architecture. Second, and as discussed in the previous subsection, many asymptotic implied-volatility formulae provide accurate approximations only in extremely remote regions of the parameter space and remain relatively inaccurate throughout the range of values relevant for training and testing. For these reasons, this third family of architectures will not be discussed further in the remainder of this paper.
	\subsection{Choices for the gating functions}\label{subsec:ModulIVFct}
	The success of the architectures proposed in Section \ref{subsec:ArchiGenPrincipal} depends not only on the quality of the local implied-volatility approximations $g_i$, but also on the ability of the gating functions to activate the appropriate local model in the appropriate region of the $(A,C)$-domain. It is therefore difficult to predict a priori which parametrisation of the gating functions will provide the best balance between flexibility, trainability and approximation accuracy, while also respecting the qualitative constraints laid out in \eqref{eq:ModFuncCond}. For this reason, we investigate several families of gating functions.
	
	Table \ref{tab:f0f1_models} summarises the families considered in this work. In all cases, $\alpha_i$, $\beta_i$, $\gamma_i$ and $\varepsilon_i$ are strictly positive trainable parameters (learnt as the exponentials of free trainable parameters), $\epsilon>0$ denotes a small fixed constant, and $N_f$ controls the expressive power of the gating functions. 
	\begin{table}[ht] 
	\centering 
	\footnotesize \renewcommand{\arraystretch}{2.2} \resizebox{\textwidth}{!}{
	\begin{tabular}{|c|c|c|} 
	\hline 
	\textbf{Model} & $f_0(A,C)$ & $f_1(A,C)$ \\ 
	\hline 
	\multirow{2}{*}{\textbf{PolyAC}} 
	& $\displaystyle f_{0,1,1}(A,C)= \frac{1} {1+ \sum\limits_{i=1}^{N_f} \alpha_i (A+\varepsilon_i)^{-\beta_i} \left(\frac{1}{C}-1\right)^{-\gamma_i}} $ 
	& \multirow{2}{*}{$\displaystyle f_{1,1}(A,C)= \frac{1} {1+ \sum\limits_{i=1}^{N_f} \alpha_i (A+\varepsilon_i)^{\beta_i} \left(\frac{1}{C}-1\right)^{\gamma_i}} $} \\
	 \cline{1-2} 
	 \textbf{PolyACeps} 
	 & $\displaystyle f_{0,1,2}(A,C)= \frac{1} {1+ \sum\limits_{i=1}^{N_f} \alpha_i (A+\epsilon)^{-\beta_i} \left(\frac{1}{C}-1\right)^{-\gamma_i}} $ 
	 & \\ 
	 \hline 
	 \textbf{PolyC} 
	 & $\displaystyle f_{0,2}(A,C)= \frac{1} {1+ \sum\limits_{i=1}^{N_f} \alpha_i \left(\frac{1}{C}-1\right)^{-\gamma_i}} $ 
	 & $\displaystyle f_{1,2}(A,C)= \frac{1} {1+ \sum\limits_{i=1}^{N_f} \alpha_i \left(\frac{1}{C}-1\right)^{\gamma_i}} $ \\ 
	 \hline 
	 \textbf{HardAC} 
	 & $\displaystyle f_{0,3}(A,C)= \frac{1} {1+ \sum\limits_{i=1}^{N_f} \alpha_i \max\!\left( \frac{C-C_{\mathrm{BS}}(A,B_{\lrm}(A))} {1-C}, 0 \right)^{\gamma_i}} $ 
	 & $\displaystyle f_{1,3}(A,C)= \frac{1} {1+ \sum\limits_{i=1}^{N_f} \alpha_i \max\!\left( \frac{C_{\mathrm{BS}}(A,B_{\urm}(A))} {C} -1, 0 \right)^{\gamma_i}} $ \\ 
	 \hline 
	 \multirow{2}{*}{\textbf{GaussAC}} 
	 & $\displaystyle f_{0,4,1}(A,C)= \exp\!\left( -\sum\limits_{i=1}^{N_f} \alpha_i (A+\varepsilon_i)^{-\beta_i} \left(\frac{1}{C}-1\right)^{-\gamma_i} \right) $ 
	 & \multirow{2}{*}{$\displaystyle f_{1,4}(A,C)= \exp\!\left( -\sum\limits_{i=1}^{N_f} \alpha_i (A+\varepsilon_i)^{\beta_i} \left(\frac{1}{C}-1\right)^{\gamma_i} \right) $} \\ 
	 \cline{1-2} 
	 \textbf{GaussACeps} 
	 & $\displaystyle f_{0,4,2}(A,C)= \exp\!\left( -\sum\limits_{i=1}^{N_f} \alpha_i (A+\epsilon)^{-\beta_i} \left(\frac{1}{C}-1\right)^{-\gamma_i} \right) $ 
	 & \\ 
	 \hline 
	 \textbf{GaussC} 
	 & $\displaystyle f_{0,5}(A,C)= \exp\!\left( -\sum\limits_{i=1}^{N_f} \alpha_i \left(\frac{1}{C}-1\right)^{-\gamma_i} \right) $ 
	 & $\displaystyle f_{1,5}(A,C)= \exp\!\left( -\sum\limits_{i=1}^{N_f} \alpha_i \left(\frac{1}{C}-1\right)^{\gamma_i} \right) $ \\ 
	 \hline 
	 \end{tabular} } 
	 \caption{Families of gating functions $f_0$ and $f_1$ considered in this work.} 
	 \label{tab:f0f1_models} 
	 \end{table}	
	 
	 All the parametrisations listed in Table~\ref{tab:f0f1_models} trivially satisfy the fundamental asymptotic requirements \eqref{eq:ModFuncCond}. In addition, they provide a smooth transition between the asymptotic and central regions of the domain. The principal distinction between the different families lies in the way the variables $A$ and $C$ are incorporated into the gating mechanism.
	 
	 The \textbf{PolyAC} and \textbf{GaussAC} families are the most flexible constructions. They allow the transition from one regime to another to depend simultaneously on the log-moneyness and the option price. This additional flexibility is motivated by the analysis of Section~\ref{sec:BSAnalysis}, where the location and width of the asymptotic regions depend strongly on $A$. 
	 
	 The \textbf{PolyC} and \textbf{GaussC} families depend exclusively on the price variable $C$. Their main advantage is their simplicity: they automatically satisfy the desired asymptotic behaviour while requiring fewer parameters than their AC counterparts. Furthermore, unlike the AC-based parametrisations, they do not suffer from potential distortions for very large values of $A$ ($f_{1,1}$ and $f_{1,4}$ converge to $0$ when $A \to +\infty$). Their main limitation is that the gating mechanism is identical for all levels of log-moneyness, despite the fact that the geometry of the implied-volatility surface varies substantially for small values of $A$ (see Figure~\ref{Fig:3_regions_asymptotic}).
	 
	 The \textbf{HardAC} family directly incorporates the asymptotic partition \eqref{eq:3_regions} introduced in Section~\ref{sec:BSAnalysis}. In particular, $f_{1,3}(A,C)=1$ whenever $C\ge C_{\mathrm{BS}}(A,B_{\urm}(A))$, and $f_{0,3}(A,C)=1$ whenever $C\le C_{\mathrm{BS}}(A,B_{\lrm}(A))$. Consequently, the corresponding local approximations become fully activated in the regions identified by the theoretical analysis inspired from \cite{Jackel}, while retaining trainable transitions between the different regimes. 
	 
	 Finally, the distinction between the polynomial and Gaussian families is primarily geometric. The polynomial models produce relatively slow transitions between regimes and may therefore capture broad overlap regions. In contrast, the Gaussian-type parametrisations decay exponentially and tend to generate sharper gating mechanisms. It is not clear a priori which behaviour is preferable for the implied-volatility inversion problem, and both possibilities are therefore investigated experimentally in the next section.
\section{Numerical experiments}\label{sec:NumericalExp}
	This section presents the experimental framework used to evaluate the proposed family of asymptotically-informed neural-network architectures for the computation of Black-Scholes implied-volatility. Our objectives are twofold. First, we assess the performance of the different variants introduced in Section~\ref{sec:Architectures}. Second, we compare these architectures against standard feed-forward neural networks and against a classical root-finding approach. 

	The experiments reported below should primarily be viewed as a study of the proposed architectural principles rather than as an exhaustive hyperparameter optimisation exercise. While key parameters such as the learning rate, the number of neurons per layer and the number of training epochs were explored, many additional design choices could be investigated further, including variations in network depth, activation functions, learning-rate schedules and layer widths. Moreover, the optimal architecture is expected to depend on the range of log-moneyness and volatility values represented in the training data.
	\subsection{Preparing the datasets}
	To learn the map $B_{\mathrm{inv}}$, we first build a dataset $\{(A_i,B_i,C_i)\}_{i=1}^m$, where $C=C_{\mathrm{BS}}(A,B)$, across a regular grid of values in the $(A,B)$-plane. As suggested by \cite{Jackel}, and instead of naive evaluations of the Black-Scholes formula, we use a numerically stable and highly accurate implementation of the price function $C_{\mathrm{BS}}$ that avoids catastrophic subtractive cancellation, underflow and overflow errors. The mathematical derivation and implementation details are presented in Appendix \ref{ap:AsymptoticBS}.
	
	In addition to the basic variables $(A,C)$, several auxiliary variables are pre-computed and supplied to selected architectures. These variables are intended to improve numerical stability and facilitate the learning of asymptotic behaviours:
	\[
	C_{\mathrm{inv}} = \frac{1}{C}-1,
	A_{\log}=\log(A),
	C_{\log}=\log(C_{\mathrm{inv}}),
	z_{\mathrm{u}} = \max\left( \frac{C_\urm}{C}-1, 0\right), 
	z_\lrm = \max \left( \frac{C- C_\lrm}{1-C} ,0\right).
	\]
	The quantities $C_{\mathrm{inv}}$, $A_{\log}$ and $C_{\log}$ naturally appear in several of the gating functions introduced in subsection \ref{subsec:ModulIVFct}. Similarly, the variables $z_{\urm}$ and $z_{\lrm}$ (which are only used in the \textbf{HardAC} architectures) encode the position of a sample relative to the asymptotic partition derived in Section \ref{sec:BSAnalysis}.
	
	$20\%$ of each dataset is reserved for testing. When adaptive learning-rate schedules are employed, an additional $15\%$ of the full dataset is reserved as a validation set which is used exclusively for learning-rate adjustment.
	\subsection{Neural networks with asymptotic regimes}
	We now describe the practical implementation of the architectures introduced in Section~\ref{sec:Architectures}. Each model is obtained by combining, through Equation \eqref{eq:IVGenFormula}, a choice of local implied-volatility functions $g_i$ (Section~\ref{subsec:LocIVFct}) with a choice of gating functions $f_i$ (Section~\ref{subsec:ModulIVFct}).
	
	Although the analytical expressions of the gating functions are simple, their implementation requires some care in order to ensure numerical stability and efficient training. To illustrate this point, consider the gating function $f_{1,1}$ of the PolyAC family with $N_f=1$ (which, despite the low value for $N_f$, produces better results than a standard neural network.) We implemented two mathematically equivalent but computationally distinct parametrisations, denoted by \textbf{Inv} and \textbf{Sig}.
	\begin{description}
	\item[Inv] We learn positive parameters $\alpha$, $\beta$, $\gamma$ and $\varepsilon$ and compute
	\begin{equation}\label{eq:PolyACInv}
	f_{1,1}(A,C)= \frac{1}{1+ \alpha (A+\varepsilon)^{\beta}C_{\mathrm{inv}}^{\gamma}}.
	\end{equation}
	\item[Sig] Instead of computing the reciprocal explicitly, we reformulate the expression by first computing
	\[
	t_{1,1} = \alpha - \beta A_{\log} - \gamma C_{\log},
	\]	
	where $\alpha \in \Rbb$, $\beta>0$ and $\gamma>0$ are trainable parameters. The gating function is then obtained via the application of the sigmoid activation $\sigma_{\mathrm{sig}}$ on $t_{1,1}$
	\[
	f_{1,1}(A,C)
	=\sigma_{\mathrm{sig}}(t_{1,1})
	= \frac{1}{1+ e^{-\alpha} A^{\beta}\left(\frac{1}{C}-1\right)^{\gamma}}.
	\]
	\end{description}
	The latter formulation is particularly appealing from an optimisation perspective since it exploits numerical primitives commonly used in deep-learning frameworks and avoids repeated evaluations of large powers and reciprocal transformations.
	
	The triple $G=(g_0,g_1,g_2)$ is produced by a feed-forward neural network with three hidden layers, each containing $N_g$ neurons and using the ReLU activation function. The inputs are the variables $(A,C)$ and the network outputs three scalar quantities corresponding to the functions $g_0$, $g_1$ and $g_2$.
	
	The final architectures are then obtained by combining the various modelling choices. For example, the architecture \textbf{PolyACInvExpInter} uses the \textbf{PolyAC} gating functions $f_{1,1}$ and $f_{0,1,1}$ implemented using the \textbf{Inv} method in \eqref{eq:PolyACInv}, while $G=(g_0,g_1,g_2)$ is learnt as the exponential of a standard neural network (to ensure positivity of the outputs). The final output of the network is then computed using the \textbf{Inter} formula \eqref{eq:InterFormula}. The architecture name therefore compactly encodes all design choices involved in the construction of the model.
	\subsection{Standard neural networks} To assess the benefit of the proposed asymptotic architectures, we compare them against conventional feed-forward neural networks (multi-layer perceptrons). These models are labelled \textbf{SimpleGen} and \textbf{SimpleExp}. Both networks take $(A,C)$ as inputs and consist of three hidden layers with ReLU activations. To ensure a fair comparison in terms of model complexity, each hidden layer contains $N=2N_g$ neurons. The two models differ only in their output layer. \textbf{SimpleGen} uses a standard affine output layer and therefore produces unrestricted real-valued outputs, while \textbf{SimpleExp} applies an exponential transformation to the final affine output, thereby enforcing positivity of the predicted implied volatility.
	\subsection{Learning and comparison metrics}\label{subsec:LossMetric} The quality of an implied-volatility approximation can be measured either in absolute terms or in relative terms. In particular, two loss functions are commonly used in quantitative-finance applications: the mean squared error (MSE) and the mean squared relative error (MSRE).
	\[\begin{array}{ccc}
	\mathrm{MSE}&=&\frac{1}{2k}\sum\limits_{i=1}^{k}|\mathrm{IV}(A_i,C_i)-B_i|^2,\\
	\mathrm{MSRE}&=&\frac{1}{2k}\sum\limits_{i=1}^{k}\left(\frac{\mathrm{IV}(A_i,C_i)-B_i}{B_i}\right)^2,\\
	\end{array}
	\]
	with $k$ denoting the size of the considered dataset (training, validation or testing.) 

	The choice between these losses has a substantial impact on the resulting approximation. Minimising the MSE tends to prioritise large values of the volatility, since errors in high-volatility regions dominate the optimisation process. As a consequence, excellent absolute accuracy may coexist with relatively poor performance in low-volatility regimes. Conversely, the MSRE assigns comparable importance to all relative errors and therefore typically yields superior accuracy when the volatility is small. Since low-volatility options are frequently encountered in practice, whereas option prices become increasingly insensitive to volatility when $C$ is close to its maximum value, the MSRE will be our default choice throughout this paper. Nevertheless, the provided code also handles the MSE error and the results of training all the proposed architectures with the MSE error are also provided in the code repository. To mitigate the errors in the large $B$ regime, one may also consider the following family of parametric losses which depend on the choice of two user-defined weights $\alpha$ and $\beta$
	\[
	\mathrm{MxLoss}(\alpha,\beta)=\alpha\,\mathrm{MSE} + \beta \,\mathrm{MSRE}.
	\]
	Although this loss function is not discussed further here, it is available in the accompanying implementation.
	
	Regardless of the chosen loss function during training, all models are evaluated using the following performance metrics:
	\begin{enumerate}
	\item Mean Squared Error $\mathrm{MSE}$.
	\item Mean Squared Relative Error $\mathrm{MSRE}$.
	\item Maximum absolute residual error
	\[
	\Delta=\max_{1\leq i \leq k} |\mathrm{IV}(A_i,C_i)-B_i|.
	\]
	\item Maximum relative residual error
	\[
	\delta=\max_{1\leq i \leq k} \frac{|\mathrm{IV}(A_i,C_i)-B_i|}{B_i} .
	\] 
	\end{enumerate}
	
	To assess robustness and training variability, each experiment was repeated independently $n=40$ times. The comparison of the scores above between training and testing datasets serve to ensure that the functions learnt generalise well to unseen data. 
	
	The neural networks were trained using three variations of the Adam optimiser\footnote{\href{https://docs.pytorch.org/docs/2.12/generated/torch.optim.Adam.html}{https://docs.pytorch.org/docs/2.12/generated/torch.optim.Adam.html}} with an initial learning rate $\eta=10^{-3}$ and $150$ epochs:
	\begin{itemize}
		\item Full-batch Adam,
		\item Mini-batch Adam with batch size $B=256$ and constant learning rate,
		\item Mini-batch Adam with batch size $B=128$ and adaptive learning-rate reduction based on validation performance. The reduction factor was $0.25$, the tolerance threshold $10^{-2}$ and the patience window $5$ epochs.
	\end{itemize}
	The third strategy consistently produced the best results and is therefore the only one reported in the remainder of the paper. Results obtained using the alternative training schemes are available in the code repository.
	\subsection{The Householder root finding algorithm} 
	As will be demonstrated in the following subsection, the proposed architectures provide highly accurate approximations of the implied volatility. Nevertheless, these approximations can be refined further at negligible computational cost by using them as initial guesses for a root-finding algorithm. This approach, which we will follow without any additional innovation, is standard in the implied-volatility literature and combines the speed of a neural-network approximation with the accuracy of an iterative numerical solver. 
	
	Following \cite{Jackel}, we employ the third-order Householder method \cite{Householder}. Based on an initial guess for $B$, the algorithm iteratively solves $C_{\mathrm{BS}}(A,B)=C_0$, where $C_0$ denotes the observed option price. For extremely small or extremely large values of the volatility, directly applying Householder's method to the objective function $g:=C_{\mathrm{BS}}(A,\cdot)-C_0$ may lead to numerical instabilities. To overcome this difficulty, we opt for the alternative objective functions proposed in \cite{Jackel}. The detailed derivation of the resulting algorithm is given in Appendix \ref{ap:Householder}.
	\subsection{Numerical results}
	We now evaluate the practical performance of the architectures introduced in Section~\ref{sec:Architectures}. The objective of this subsection is twofold. First, we assess the extent to which the proposed asymptotic decomposition improves the direct approximation of the implied volatility. Second, we investigate how these approximations behave when used as initial guesses for a high-order root-finding algorithm. 
	
	To facilitate comparison with the classical implied volatility literature, and in particular with the benchmark studies of \cite{Jackel}, we consider ranges of log-moneyness and total volatility that encompass both practically relevant and highly challenging asymptotic regimes.
	
	For each dataset, all neural network architectures proposed in this paper were trained and evaluated over $40$ independent runs. The complete collection of results is available in the accompanying code repository
	\footnote{The full numerical results and the code to replicate all the figures and tables in this paper are publicly available on \href{https://github.com/samira-amiriyan/asymptotically-informed-implied-volatility}{https://github.com/samira-amiriyan/asymptotically-informed-implied-volatility}.}. To preserve readability, we only report here the strongest performing architectures together with the best standard feed-forward neural network. The retained and discussed architectures are the ones that scored consistently low values across all metrics introduced in Section~\ref{subsec:LossMetric}; in particular, these show good generalisation properties, as measured by the proximity between training and testing errors. Whenever possible, and as long as the performance was comparable, we chose a representative of each one of the gating function families presented in Subsection~\ref{subsec:ModulIVFct}.
	
	The experiments reported below correspond to a second series of $25$ independent training runs of the selected best performing architectures, from which the reported statistics were computed. A third series of $10$ runs was then used to identify and store the best-performing set of network parameters for each architecture, which were subsequently used to generate the visualisations.
	
	Below, all architectures use 
	\begin{itemize}
	\item $N_f=5$, the number of terms used in the gating functions (unless the implementation of the gating function is of the \textbf{Sig} type, which does not use this parameter).
	\item $N_g=64$, the hidden width of the local implied-volatility networks.
	\item $N=128$, the width of the standard feed-forward networks.
	\end{itemize}
	\subsubsection{A large-range dataset}	
	We begin with a particularly demanding benchmark designed to test the robustness of the proposed architectures well beyond the range typically encountered in market applications. More precisely, we consider $(A,B)\in [0,16]\times [10^{-5},7.07]$. This range covers several extreme asymptotic regimes and therefore provides a stringent test of any implied-volatility approximation method.
	
	Figure~\ref{fig:evolution_loss_sgd_var_DS1large} displays the evolution of the training loss averaged over $25$ runs. Table~\ref{tab:loss_sgd_var_DS1large} reports the corresponding numerical performance metrics on both the training and testing datasets before any root-finding refinement is applied.
	
	Several fundamental observations immediately emerge from Figure \ref{fig:evolution_loss_sgd_var_DS1large} and Table~\ref{tab:loss_sgd_var_DS1large}. First, all asymptotic architectures dramatically outperform the standard feed-forward neural network. The improvement is particularly striking for the MSRE metric, where the best asymptotic architectures reduce the error by roughly four orders of magnitude. Similar gains are observed in the maximum relative error metric. This confirms that explicitly modelling the asymptotic structure of the inversion problem provides a substantial advantage over learning the inverse map directly with a conventional neural network. Second, all selected asymptotic architectures exhibit excellent generalisation properties. For every reported metric, training and testing performances are nearly identical, indicating that the observed gains do not arise from overfitting but from a genuinely better approximation of the underlying inverse map. Third, despite its apparent simplicity, the architecture PolyCInvExpInter remains highly competitive. This observation is consistent with the geometric analysis of Section~\ref{sec:BSAnalysis}. Over a substantial portion of the domain, the asymptotic boundaries become nearly horizontal in the $(A,C)$-plane, making a gating mechanism depending only on $C$ surprisingly effective. Nevertheless, architectures incorporating both variables generally achieve the best overall balance across metrics. Finally, it is worth emphasising that, even before any iterative refinement, the proposed architectures achieve accuracy levels substantially beyond those reported by classical closed-form approximations such as~\cite{Li}, while remaining valid over a significantly wider range of parameters.
	\begin{figure}[h!]
	\includegraphics[scale=0.4]{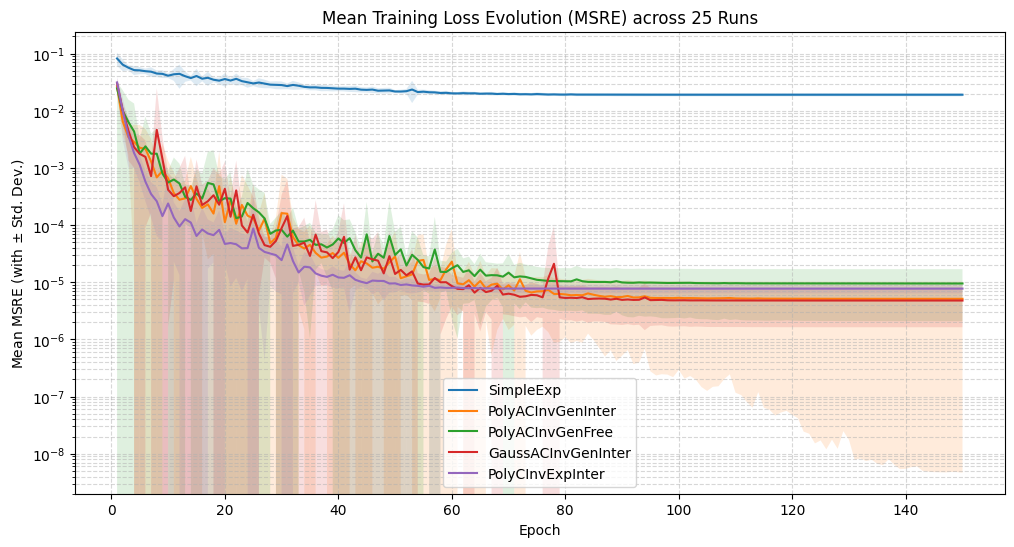}
	\centering
	\caption{Evolution of the training MSRE averaged over $25$ independent runs for the selected architectures on the dataset $(A,B)\in[0,16]\times[10^{-5},7.07]$. The shaded regions represent one standard deviation around the mean.}
	\label{fig:evolution_loss_sgd_var_DS1large}
	\end{figure}
	
\begin{table}[ht]
	\centering
	\scriptsize
	\setlength{\tabcolsep}{3pt}
	\renewcommand{\arraystretch}{1.15}
	
	\begin{adjustbox}{max width=\textwidth}
		\begin{tabular}{lcccccccc}
			\toprule
			Model 
			& \makecell{$\mathrm{MSE}_\mathrm{train}$} 
			& \makecell{$\mathrm{MSE}_\mathrm{test}$} 
			& \makecell{$\mathrm{MSRE}_\mathrm{train}$} 
			& \makecell{$\mathrm{MSRE}_\mathrm{test}$} 
			& \makecell{$\Delta_\mathrm{train}$} 
			& \makecell{$\Delta_\mathrm{test}$} 
			& \makecell{$\delta_\mathrm{train}$} 
			& \makecell{$\delta_\mathrm{test}$} \\
			\midrule
			
			SimpleExp 
			& $4.55 \times 10^{-2}$
			& $4.57 \times 10^{-2}$ 
			& $1.91 \times 10^{-2}$ 
			& $1.88 \times 10^{-2}$ 
			& $1.11$ 
			& $1.11$
			& $9.85 \times 10^{-1}$
			& $1.72$ \\
			
			PolyACInvGenInter  
			& $\mathbf{3.08 \times 10^{-5}}$ 
			& $\mathbf{3.14 \times 10^{-5}}$ 
			& $5.13 \times 10^{-6}$ 
			& $5.29 \times 10^{-6}$
			& $6.96 \times 10^{-2}$
			& $6.84 \times 10^{-2}$
			& $4.98 \times 10^{-2}$ 
			& $7.46 \times 10^{-2}$ \\
			
			PolyACInvGenFree  
			& $4.51 \times 10^{-5}$ 
			& $4.59 \times 10^{-5}$ 
			& $9.52 \times 10^{-6}$ 
			& $9.86 \times 10^{-6}$ 
			& $6.44 \times 10^{-2}$ 
			& $6.33 \times 10^{-2}$ 
			& $5.86 \times 10^{-2}$ 
			& $1.06 \times 10^{-1}$ \\
			
			GaussACInvGenInter  
			& $3.21 \times 10^{-5}$ 
			& $3.27 \times 10^{-5}$ 
			& $\mathbf{4.79 \times 10^{-6}}$
			& $\mathbf{4.89 \times 10^{-6}}$
			& $7.65 \times 10^{-2}$ 
			& $7.51 \times 10^{-2}$ 
			& $\mathbf{4.49 \times 10^{-2}}$ 
			& $\mathbf{6.49 \times 10^{-2}}$ \\
			
			PolyCInvExpInter   
			& $3.58 \times 10^{-5}$
			& $3.59 \times 10^{-5}$ 
			& $7.74 \times 10^{-6}$ 
			& $7.71 \times 10^{-6}$ 
			& $\mathbf{3.56 \times 10^{-2}}$ 
			& $\mathbf{3.50 \times 10^{-2}}$ 
			& $1.30 \times 10^{-1}$ 
			& $6.87 \times 10^{-2}$ \\
			
			\bottomrule
		\end{tabular}
	\end{adjustbox}
	
	\caption{Average performance of the selected neural networks across $25$ independent runs on the dataset $(A,B)\in[0,16]\times[10^{-5},7.07]$ (trained with the MSRE loss function minimised using mini-batch Adam with adaptive learning-rate.)}
	\label{tab:loss_sgd_var_DS1large}
\end{table}

	While aggregate metrics provide a useful global summary, they do not reveal where errors occur in the parameter space. To address this point, Figure~\ref{fig:heatmap_AB_MSRE_DS1large_0_16_1000_1em5_7p07_1000} displays heat maps of the pointwise relative error (on a logarithmic scale) over a dense evaluation grid, containing $1000$ points in each coordinate direction, which is twice the resolution used for the initial dataset. Grey regions correspond to parameter values producing numerically meaningless prices outside the interval $[10^{-50}, 1-10^{-50}]$. 
	
	The contrast between the standard feed-forward network and the proposed architectures is particularly revealing. The standard neural network exhibits a large region of elevated error for small values of the volatility, the range of which increases with log-moneyness. This is precisely the regime where implied-volatility inversion is most challenging and where the asymptotic analysis of Section~\ref{sec:BSAnalysis} suggests specialised treatment is necessary. In contrast, the error produced by the asymptotic architectures remains uniformly small across almost the entire admissible domain. The heat maps therefore demonstrate that the improvement is not confined to isolated regions but is genuinely global.

	\begin{figure}[h!]
	\includegraphics[width=\linewidth]{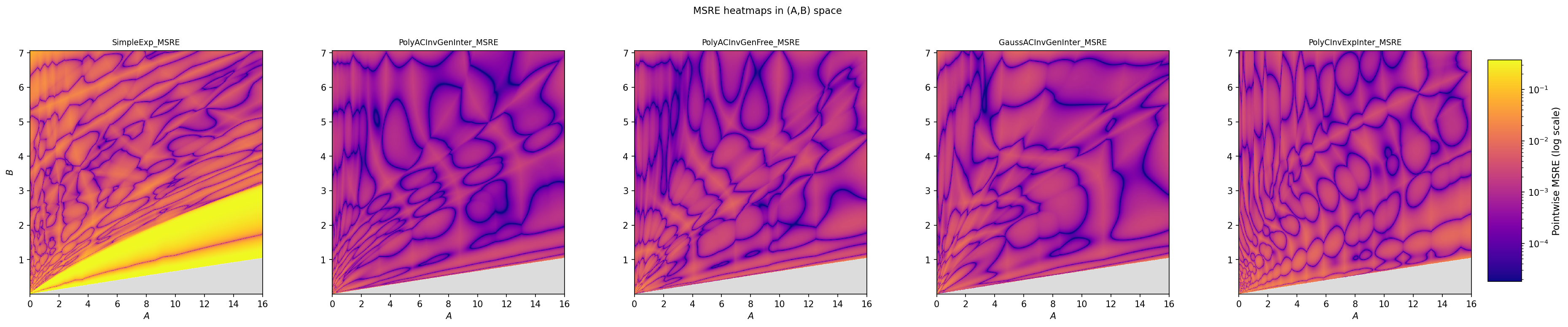}
	\centering
	\caption{Pointwise relative error, on a logarithmic scale, of the chosen architectures over the domain $(A,B)\in[0,16]\times[10^{-5},7.07]$. The evaluation grid contains $1000\times1000$ points.
	}
	\label{fig:heatmap_AB_MSRE_DS1large_0_16_1000_1em5_7p07_1000}
	\end{figure}
	
	The previous results assessed the neural networks as stand-alone implied-volatility approximators. In practice, however, one can exploit them even more effectively by using their outputs as initial guesses for an iterative root-finding procedure. Given $(A,C)$, denote by $B_0$ the output of a neural network for the implied volatility, then by $B_1$ and $B_2$ the refinement of this value by the Householder method (described in Appendix~\ref{ap:Householder}) after $1$ and $2$ iterations respectively. To quantify convergence, we consider the logarithm of the relative error
	\begin{equation}\label{eq:LogRelError}
	\log \max \left( \left|\frac{B_i}{B_{\mathrm{inv}}} -1\right| , 10^{-18}\right),
	\end{equation}
	
	Figure~\ref{fig:log_convergence_Householder_2_iter_DS1large_A_5} shows the convergence behaviour for a fixed log-moneyness value $A=5$ over the interval $B\in [10^{-5},2]$. The figure demonstrates that the neural-network outputs already provide highly accurate initial approximations. After only two Householder iterations, all proposed asymptotic architectures reach essentially the maximum precision across the entire range investigated. In contrast, the standard neural network frequently starts from a significantly poorer initial approximation, particularly for small volatilities. While Householder iterations still improve the prediction substantially, the resulting accuracy remains noticeably inferior.
	
	\begin{figure}[h!]
	\includegraphics[width=\linewidth]{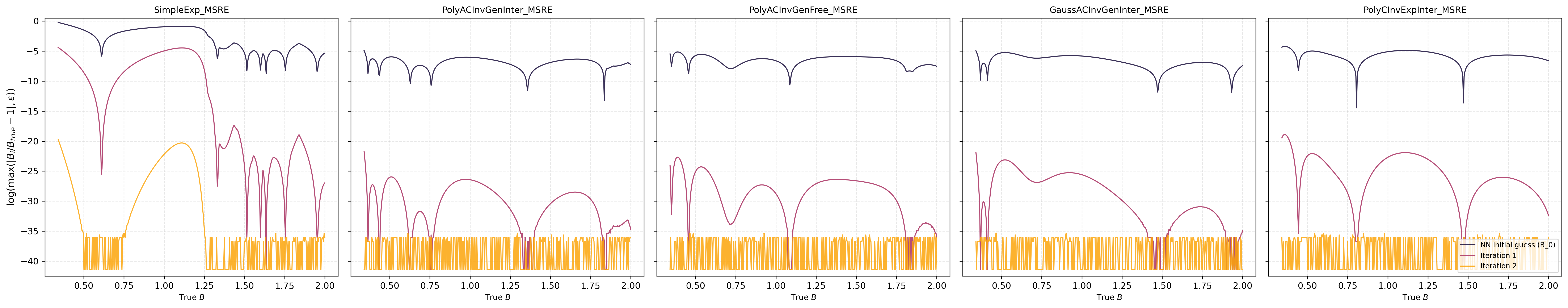}
	\centering
	\caption{Evolution of the logarithmic relative error \eqref{eq:LogRelError} for a fixed value $A=5$ over the interval $B\in [10^{-5},2.00]$. The curves correspond to the neural-network output ($B_0$), after one Householder iteration, and after two Householder iterations. $B$ is sampled uniformly over a grid of $500$ points.}
	\label{fig:log_convergence_Householder_2_iter_DS1large_A_5}
	\end{figure}
	
	The corresponding numerical evidence is quantified in Table~\ref{tab:stats_Householder_2_iter_DS1large_A_5}. The average logarithmic relative error after two iterations is close to the numerical precision limit for all asymptotic architectures, while the maximum observed error remains extremely small throughout the test range. The similarity between the different asymptotic architectures is itself noteworthy. Once a sufficiently accurate initial approximation has been obtained, only a small number of Householder iterations are required to eliminate the remaining discrepancy.
	
	\begin{table}[ht]
  \centering
  \begin{tabular}{lccc}
    \hline
    Model & Avg Log Error & Std Log Error & Max Log Error \\
    \hline
    SimpleExp                & $-33.56$ & $7.38$ & $-19.70$ \\
    PolyACInvGenInter        & $\mathbf{-38.85}$ & $2.61$ & $\mathbf{-35.35}$ \\
    PolyACInvGenFree         & $\mathbf{-38.85}$ & $2.60$ & $\mathbf{-35.35}$ \\
    GaussACInvGenInter       & $-38.72$ & $\mathbf{2.59}$ & $\mathbf{-35.35}$ \\
    PolyCInvExpInter         & $-38.70$ & $2.62$ & $\mathbf{-35.35}$ \\
    \hline
  \end{tabular}
    \caption{Average, standard deviation and the maximum of the logarithm of the relative error \eqref{eq:LogRelError} for the chosen architectures after $2$ Householder iterations with $A=5$ and $B\in [10^{-5},2.00]$. $B$ is sampled uniformly over a grid of $500$ points.}
    \label{tab:stats_Householder_2_iter_DS1large_A_5}
\end{table}

	To assess the robustness of this behaviour throughout the entire parameter space, Figure~\ref{fig:heatmap_log_rel_err_nobrent_DS1large_0_16_1em5_7p07_1000} displays the logarithm of the relative error \eqref{eq:LogRelError} after two Householder iterations over the full domain $(A,B)\in[0,16]\times[10^{-5},7.07]$, with $1000$ grid values for each variable. The difference between the standard neural network and the proposed architectures remains pronounced in the low-volatility region.

	\begin{figure}[h!]
	\includegraphics[width=\linewidth]{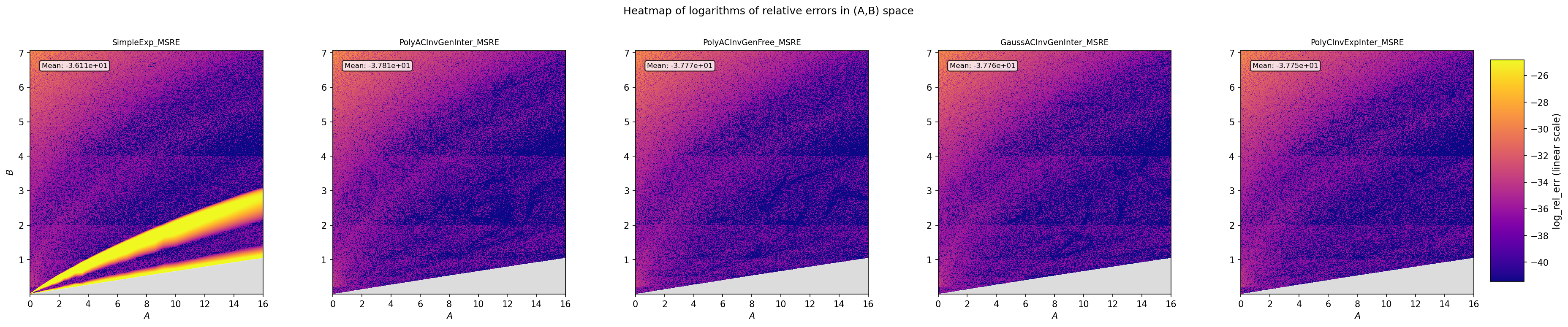}
	\centering
	\caption{Heat map of the logarithm of the relative error \eqref{eq:LogRelError} of the chosen neural networks after $2$ iterations of the Householder root finding method over the range of values $(A,B)\in[0,16]\times[10^{-5},7.07]$, with $1000$ grid values for each variable.}
	\label{fig:heatmap_log_rel_err_nobrent_DS1large_0_16_1em5_7p07_1000}
	\end{figure}
	
	As a final benchmark, we compare the neural-network-plus-Householder methodology with a classical root-finding strategy based on Brent's method\footnote{as implemented in the scipy.optimize library: \\
	\href{https://docs.scipy.org/doc/scipy/reference/generated/scipy.optimize.brentq.html}{https://docs.scipy.org/doc/scipy/reference/generated/scipy.optimize.brentq.html}} \cite{Brent}. Because of the computational cost associated with dense evaluations of Brent's algorithm, we restrict our attention to the more realistic region $(A,B)\in[0,0.5]\times[10^{-5},2]$, sampled on a $200\times200$ grid. Brent's method is allowed a maximum of $50$ iterations.
	
	Figure~\ref{fig:heatmap_log_rel_err_brent_DS1large_0_0p5_1em5_2_200} compares the logarithmic relative error \eqref{eq:LogRelError} obtained by Brent's method with that obtained by the proposed neural-network-plus-Householder pipeline. While Brent's algorithm is generally regarded as highly reliable for moderate values of the implied volatility, the resulting error landscape appears considerably less structured. Regions of reduced accuracy occur in a manner that is difficult to predict from the geometry of the parameter space. In contrast, the asymptotic architectures provide highly accurate initial guesses and therefore benefit from the rapid local convergence of Householder's method.

	\begin{figure}[h!]
	\includegraphics[width=\linewidth]{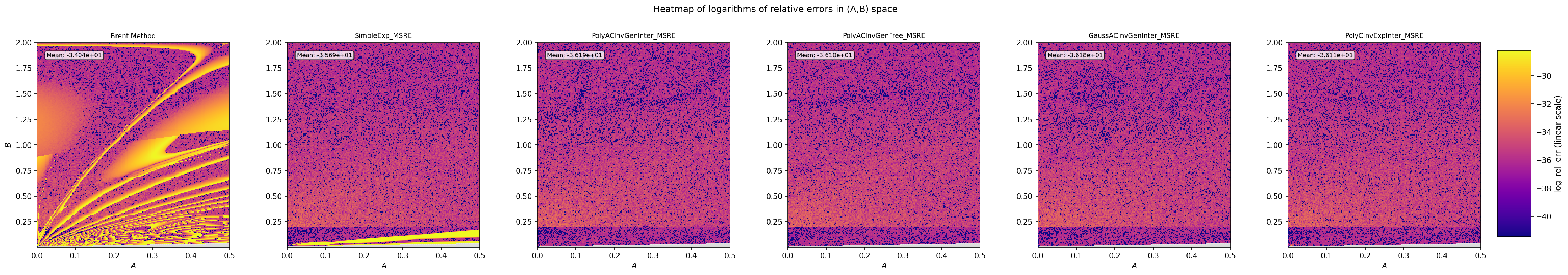}
	\centering
	\caption{Comparison of the logarithm of the relative error \eqref{eq:LogRelError} between Brent's method (maximum of $50$ iterations) and the proposed neural-network-plus-Householder approach on the domain $(A,B)\in[0,0.5]\times[10^{-5},2]$. The evaluation grid contains $200\times 200$ points.}
	\label{fig:heatmap_log_rel_err_brent_DS1large_0_0p5_1em5_2_200}
	\end{figure}
	
	These results suggest that, rather than completely replacing classical root-finding methods, the proposed neural networks can act as powerful accelerators that place the iterative solver immediately in the convergence regime. The combination of an asymptotically-informed neural network with only two Householder iterations therefore yields an extremely accurate and computationally efficient implied-volatility solver.
	\subsubsection{A medium-range dataset}	
	We now consider a more realistic range of parameters, $(A,B)\in[0,3]\times[10^{-7},1.22]$, which contains the region most frequently encountered in practical implied-volatility computations while still including challenging low-volatility regimes. The corresponding training curves, summary statistics, error heat maps and Householder refinements are reported in Figures~\ref{fig:evolution_loss_sgd_var_DS2medium}, \ref{fig:heatmap_AB_MSRE_DS2medium_0_3_1000_1em7_1p22_1000}, \ref{fig:log_convergence_Householder_2_iter_DS2medium_A_1p5} and \ref{fig:heatmap_log_rel_err_brent_DS2medium_0_3_1em7_1p22_200}, together with Tables~\ref{tab:loss_sgd_var_DS2medium} and \ref{tab:stats_Householder_2_iter_DS2medium_A_1p5}.
	
	The conclusions largely reinforce those obtained on the larger benchmark dataset. All asymptotic architectures substantially outperform the standard feed-forward neural network across every metric considered. For example, the best-performing architecture, \textbf{GaussACInvExpInter}, reduces the test MSRE from $3.51\times10^{-2}$ to $ 2.24\times10^{-5}$, corresponding to an improvement by more than three orders of magnitude. Similar gains are observed for the MSE and maximum residual metrics. Furthermore, the close agreement between training and testing errors once again confirms the excellent generalisation properties of the proposed architectures.
	
	A notable difference compared with the previous dataset is the emergence of a clear preference for gating functions that depend jointly on the log-moneyness $A$ and the option price $C$. Indeed, all of the strongest-performing architectures belong to the AC family. This observation is consistent with the geometric analysis of Section~\ref{sec:BSAnalysis}. As $A\to 0$, the location of the asymptotic boundaries varies significantly with $A$, making a purely price-based partition of the domain less effective than in the large-range benchmark. 
	
	Another noteworthy observation is the strong performance of the Gaussian family of gating functions (in particular the model GaussACInvExpInter), which dominate nearly all reported metrics and exhibit remarkably stable behaviour across repeated runs. The polynomially-induced gating functions remain competitive, however. In particular, the architecture PolyACSigExpFree appears among the strongest-performing models despite relying on the \textbf{Sig} implementation, which effectively corresponds to the case $N_f=1$. Although its accuracy remains slightly below that of the most expressive Gaussian models, its performance demonstrates that even relatively simple asymptotic parametrisations can provide substantial improvements over standard neural networks.
	
	We also remark that the achieved relative accuracies remain comparable to those obtained on the substantially larger dataset of the previous subsection. Since the current dataset contains much smaller volatility values, one might expect relative errors to increase significantly. The fact that this does not occur suggests that the proposed architectures genuinely learn the asymptotic structure of the inversion map rather than merely interpolating over the training domain.

	\begin{figure}[h!]
	\includegraphics[scale=0.4]{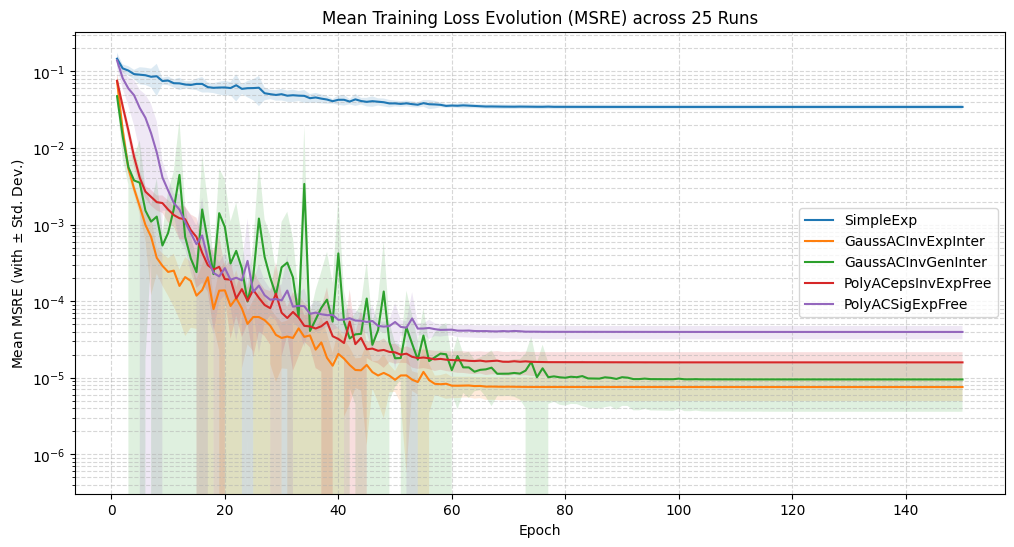}
	\centering
	\caption{Evolution of the training MSRE averaged over $25$ independent runs for the selected architectures on the dataset $(A,B)\in [0,3]\times[10^{-7},1.22]$. The shaded regions represent one standard deviation around the mean.}
	\label{fig:evolution_loss_sgd_var_DS2medium}
	\end{figure}	
	
\begin{table}[ht]
	\centering
	\scriptsize
	\setlength{\tabcolsep}{3pt}
	\renewcommand{\arraystretch}{1.15}
	
	\begin{adjustbox}{max width=\textwidth}
		\begin{tabular}{lcccccccc}
			\toprule
			Model 
			& \makecell{$\mathrm{MSE}_\mathrm{train}$} 
			& \makecell{$\mathrm{MSE}_\mathrm{test}$} 
			& \makecell{$\mathrm{MSRE}_\mathrm{train}$} 
			& \makecell{$\mathrm{MSRE}_\mathrm{test}$} 
			& \makecell{$\Delta_\mathrm{train}$} 
			& \makecell{$\Delta_\mathrm{test}$} 
			& \makecell{$\delta_\mathrm{train}$} 
			& \makecell{$\delta_\mathrm{test}$} \\
			\midrule
			
			SimpleExp 
			& $4.49 \times 10^{-3}$
			& $4.52 \times 10^{-3}$
			& $3.44 \times 10^{-2}$ 
			& $3.51 \times 10^{-2}$
			& $2.97 \times 10^{-1}$ 
			& $2.96 \times 10^{-1}$ 
			& $1.11$
			& $1.10$ \\
			
			GaussACInvExpInter  
			& $\mathbf{1.30 \times 10^{-6}}$
			& $\mathbf{1.30 \times 10^{-6}}$
			& $\mathbf{7.53 \times 10^{-6}}$
			& $2.24 \times 10^{-5}$ 
			& $\mathbf{6.92 \times 10^{-3}}$
			& $\mathbf{6.64 \times 10^{-3}}$ 
			& $1.67 \times 10^{-1}$ 
			& $8.21 \times 10^{-1}$ \\
			
			GaussACInvGenInter  
			& $1.69 \times 10^{-6}$
			& $1.70 \times 10^{-6}$ 
			& $9.46 \times 10^{-6}$ 
			& $\mathbf{1.04 \times 10^{-5}}$ 
			& $7.26 \times 10^{-3}$ 
			& $7.09 \times 10^{-3}$ 
			& $\mathbf{1.19 \times 10^{-1}}$ 
			& $\mathbf{1.53 \times 10^{-1}}$ \\
			
			PolyACepsInvExpFree  
			& $2.22 \times 10^{-6}$ 
			& $2.24 \times 10^{-6}$ 
			& $1.58 \times 10^{-5}$
			& $3.11 \times 10^{-5}$
			& $7.34 \times 10^{-3}$
			& $7.03 \times 10^{-3}$ 
			& $1.63 \times 10^{-1}$ 
			& $8.22 \times 10^{-1}$ \\
			
			PolyACSigExpFree   
			& $4.31 \times 10^{-6}$ 
			& $4.35 \times 10^{-6}$ 
			& $3.95 \times 10^{-5}$ 
			& $5.95 \times 10^{-5}$ 
			& $1.33 \times 10^{-2}$
			& $1.27 \times 10^{-2}$
			& $2.14 \times 10^{-1}$ 
			& $9.13 \times 10^{-1}$ \\
			
			\bottomrule
		\end{tabular}
	\end{adjustbox}
	
	\caption{Average performance of the selected neural networks across $25$ independent runs on the dataset $(A,B)\in [0,3]\times[10^{-7},1.22]$ (trained with the MSRE loss function minimised using mini-batch Adam with adaptive learning-rate.)}
	\label{tab:loss_sgd_var_DS2medium}
\end{table}

	The pointwise error distributions shown in Figure~\ref{fig:heatmap_AB_MSRE_DS2medium_0_3_1000_1em7_1p22_1000} further support these observations. The standard neural network again exhibits a pronounced region of elevated relative error in the low-volatility regime. In contrast, the asymptotic architectures maintain a significantly more homogeneous error profile throughout the domain.
	
	\begin{figure}[h!]
	\includegraphics[width=\linewidth]{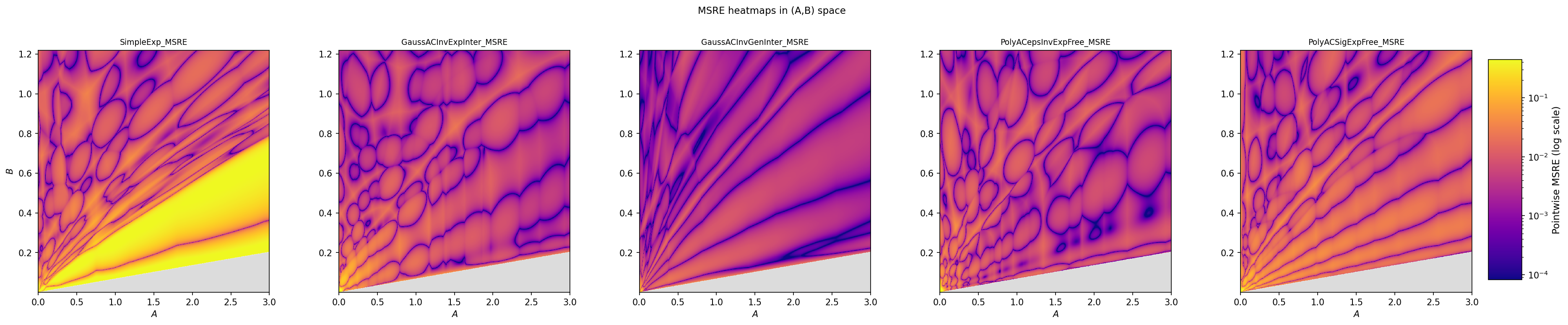}
	\centering
	\caption{Pointwise relative error, on a logarithmic scale, of the chosen architectures over the domain $(A,B)\in [0,3]\times[10^{-7},1.22]$. The evaluation grid contains $1000\times1000$ points.}
	\label{fig:heatmap_AB_MSRE_DS2medium_0_3_1000_1em7_1p22_1000}
	\end{figure}
	
	We next study the effect of Householder refinement. Figure~\ref{fig:log_convergence_Householder_2_iter_DS2medium_A_1p5} illustrates the convergence behaviour for a fixed log-moneyness value $A=1.5$ over the interval $B\in[10^{-7},1.22]$. The behaviour closely mirrors that observed on the larger benchmark. The asymptotic architectures already provide highly accurate initial guesses, allowing the Householder iterations to converge essentially to the maximum precision after only two iterations. By comparison, the standard neural network starts considerably farther from the solution, particularly in the low-volatility regime. This observation is quantified in Table~\ref{tab:stats_Householder_2_iter_DS2medium_A_1p5}. After two iterations, all asymptotic architectures achieve extremely similar levels of accuracy, with average logarithmic relative errors around $-37$. This confirms that once the initial approximation lies within the basin of rapid convergence of Householder's method, the precise choice of architecture becomes less critical.

	\begin{figure}[h!]
	\includegraphics[width=\linewidth]{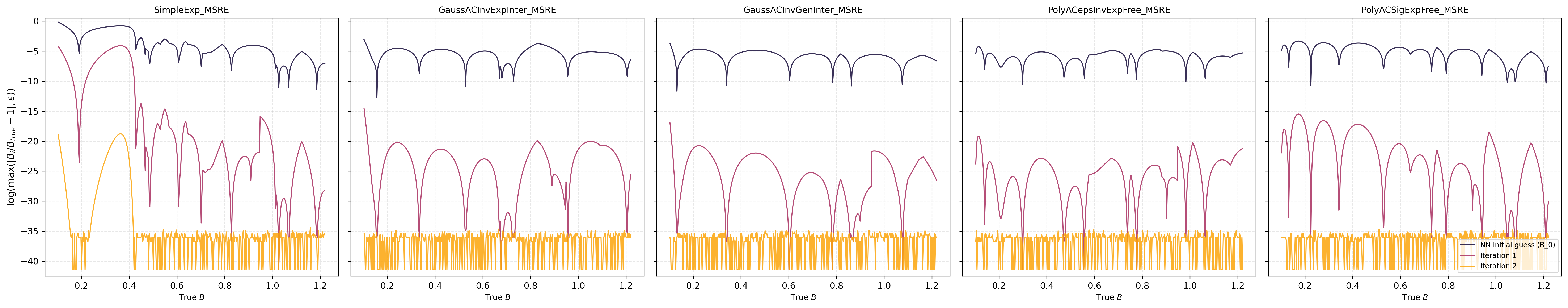}
	\centering
	\caption{Evolution of the logarithmic relative error \eqref{eq:LogRelError} for a fixed value $A=1.5$ over the interval $B\in [10^{-7},1.22]$. The curves correspond to the neural-network output ($B_0$), after one Householder iteration, and after two Householder iterations. $B$ is sampled uniformly over a grid of $500$ points.}
	\label{fig:log_convergence_Householder_2_iter_DS2medium_A_1p5}
	\end{figure}
	
	\begin{table}[ht]
  \centering
  \begin{tabular}{lccc}
    \hline
    Model & Avg Log Error & Std Log Error & Max Log Error \\
    \hline
    SimpleExp                & $-34.39$ & $6.00$ & $-18.78$ \\
    GaussACInvExpInter       & $-37.26$ & $2.42$ & $-34.65$ \\
    GaussACInvGenInter       & $-37.25$ & $2.43$ & $\mathbf{-34.94}$ \\
    PolyACepsInvExpFree      & $\mathbf{-37.33}$ & $2.42$ & $-34.79$ \\
    PolyACSigExpFree        & $-37.13$ & $\mathbf{2.34}$ & $-34.65$ \\
    \hline
  \end{tabular}
    \caption{Average, standard deviation and the maximum of the logarithm of the relative error \eqref{eq:LogRelError} for the chosen architectures after $2$ Householder iterations with $A=1.5$ and $B\in [10^{-7},1.22]$. $B$ is sampled uniformly over a grid of $500$ points.}
    \label{tab:stats_Householder_2_iter_DS2medium_A_1p5}
\end{table}

	Finally, Figure~\ref{fig:heatmap_log_rel_err_brent_DS2medium_0_3_1em7_1p22_200} compares the proposed neural-network-plus-Householder pipeline with Brent's method on the entire evaluation domain. The conclusions are consistent with those obtained on the larger benchmark. After two Householder iterations, all asymptotic architectures achieve a remarkably uniform level of accuracy across the parameter space. By contrast, Brent's method exhibits a much more irregular error landscape. These results further illustrate the practical value of combining a learned asymptotic approximation with a high-order local refinement scheme.

	\begin{figure}[h!]
	\includegraphics[width=\linewidth]{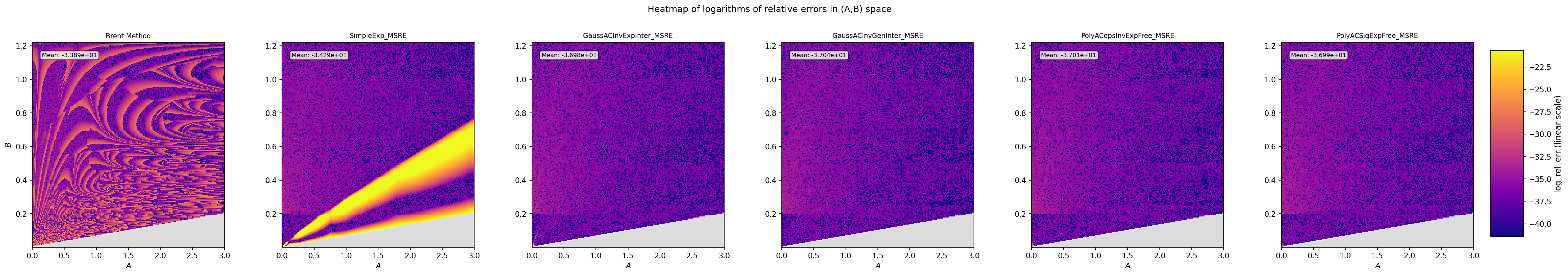}
	\centering
	\caption{Comparison of the logarithm of the relative error \eqref{eq:LogRelError} between Brent's method (maximum of $50$ iterations) and the proposed neural-network-plus-Householder approach on the domain $(A,B)\in [0,3]\times[10^{-7},1.22]$. The evaluation grid contains $200\times 200$ points.}
	\label{fig:heatmap_log_rel_err_brent_DS2medium_0_3_1em7_1p22_200}
	\end{figure}
	\subsubsection{A small-range dataset}
	We conclude our numerical study with a highly localised but particularly challenging dataset, $(A,B)\in [0,10^{-5}]\times [10^{-5},0.18]$. From a financial perspective, this regime corresponds to options that remain extremely close to the at-the-money configuration, while simultaneously including very small volatility values.
	
	The results are summarised in Figures~\ref{fig:evolution_loss_sgd_var_DS3small}, \ref{fig:heatmap_AB_MSRE_DS3small_0_1e5_1e5_0p18_500}, \ref{fig:log_convergence_Householder_2_iter_DS3small_A_5em6} and \ref{fig:heatmap_log_rel_err_brent_DS3small_0_1e5_1e5_0p18_200}, together with Tables~\ref{tab:loss_sgd_var_DS3small} and \ref{tab:stats_Householder_2_iter_DS3small_A_5em6}. 
	
	In contrast with the previous datasets, the performance gap between the standard neural network and the asymptotic architectures is noticeably smaller. This behaviour is unsurprising: over such a narrow interval of values of $A$, the inversion problem effectively becomes almost one-dimensional. As a consequence, even a standard feed-forward neural network can approximate the implied-volatility map reasonably well. Nevertheless, the proposed architectures continue to provide systematic improvements. In particular, the architecture PolyACepsInvExpInter clearly dominates all competitors across the majority of metrics. Compared with the standard neural network, it reduces the test MSRE from $9.04\times10^{-6}$ to $ 2.33\times10^{-7}$, corresponding to an improvement factor of approximately $39$. Similar gains are observed in the maximum absolute and relative residuals. While the improvement is less dramatic than in the previous datasets, it remains significant given the already small baseline errors. 
	
	An additional observation concerns the nature of the best-performing architectures. Unlike the medium-range dataset, where Gaussian AC gating functions were dominant, the strongest model now belongs to the polynomial family. Moreover, architectures depending only weakly on the variable $A$, such as PolyCSigExpInter and GaussCInvExpInter, become highly competitive. This is consistent with the fact that the entire dataset is concentrated near $A=0$. In this regime, the asymptotic partition derived in Section~\ref{sec:BSAnalysis} varies very little as a function of log-moneyness, reducing the importance of sophisticated $A$-dependent gating mechanisms.
	
	\begin{figure}[h!]
	\includegraphics[scale=0.4]{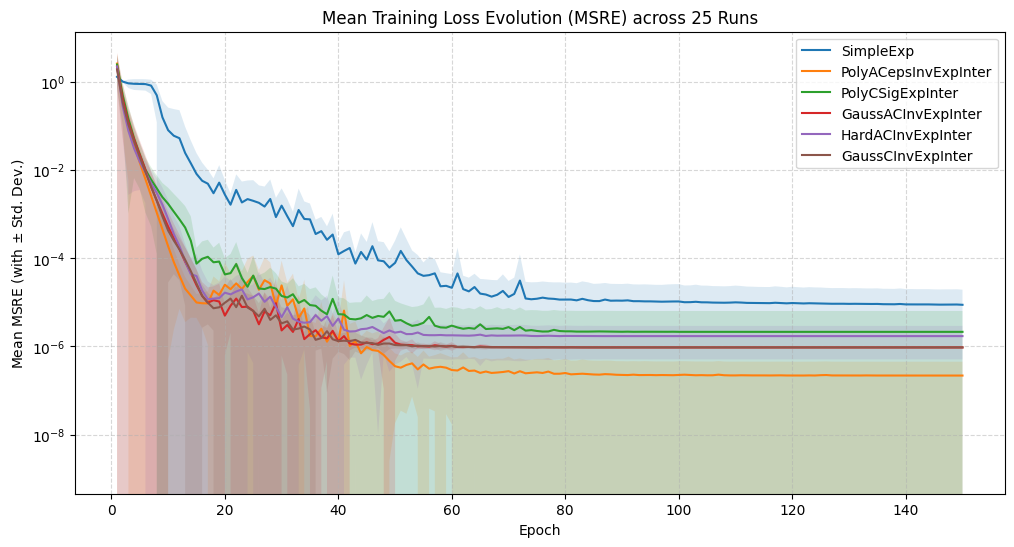} 
	\centering
	\caption{Evolution of the training MSRE averaged over $25$ independent runs for the selected architectures on the dataset $(A,B)\in [0,10^{-5}]\times [10^{-5},0.18]$. The shaded regions represent one standard deviation around the mean.}
	\label{fig:evolution_loss_sgd_var_DS3small}
	\end{figure}	
	
\begin{table}[ht]
	\centering
	\scriptsize
	\setlength{\tabcolsep}{3pt}
	\renewcommand{\arraystretch}{1.15}
	
	\begin{adjustbox}{max width=\textwidth}
		\begin{tabular}{lcccccccc}
			\toprule
			Model 
			& \makecell{$\mathrm{MSE}_\mathrm{train}$} 
			& \makecell{$\mathrm{MSE}_\mathrm{test}$} 
			& \makecell{$\mathrm{MSRE}_\mathrm{train}$} 
			& \makecell{$\mathrm{MSRE}_\mathrm{test}$} 
			& \makecell{$\Delta_\mathrm{train}$} 
			& \makecell{$\Delta_\mathrm{test}$} 
			& \makecell{$\delta_\mathrm{train}$} 
			& \makecell{$\delta_\mathrm{test}$} \\
			\midrule
			
			SimpleExp 
			& $1.25 \times 10^{-8}$ 
			& $1.25 \times 10^{-8}$ 
			& $8.80 \times 10^{-6}$ 
			& $9.04 \times 10^{-6}$ 
			& $3.61 \times 10^{-4}$
			& $3.61 \times 10^{-4}$
			& $3.27 \times 10^{-2}$ 
			& $3.25 \times 10^{-2}$ \\
			
			PolyACepsInvExpInter 
			& $5.05 \times 10^{-10}$
			& $5.10 \times 10^{-10}$ 
			& $\mathbf{2.19 \times 10^{-7}}$
			& $\mathbf{2.33 \times 10^{-7}}$
			& $\mathbf{6.24 \times 10^{-5}}$ 
			& $\mathbf{6.23 \times 10^{-5}}$
			& $\mathbf{9.62 \times 10^{-3}}$
			& $\mathbf{9.51 \times 10^{-3}}$ \\
			
			PolyCSigExpInter 
			& $6.85 \times 10^{-9}$
			& $6.90 \times 10^{-9}$ 
			& $2.14 \times 10^{-6}$ 
			& $2.20 \times 10^{-6}$ 
			& $1.47 \times 10^{-4}$ 
			& $1.47 \times 10^{-4}$ 
			& $3.49 \times 10^{-2}$ 
			& $3.44 \times 10^{-2}$ \\
			
			GaussACInvExpInter  
			& $5.18 \times 10^{-10}$
			& $5.23 \times 10^{-10}$ 
			& $9.52 \times 10^{-7}$
			& $9.42 \times 10^{-7}$ 
			& $8.07 \times 10^{-5}$
			& $8.06 \times 10^{-5}$
			& $3.27 \times 10^{-2}$ 
			& $3.20 \times 10^{-2}$ \\
			
			HardACInvExpInter   
			& $1.98 \times 10^{-9}$ 
			& $2.00 \times 10^{-9}$
			& $1.71 \times 10^{-6}$
			& $1.73 \times 10^{-6}$ 
			& $1.12 \times 10^{-4}$
			& $1.12 \times 10^{-4}$ 
			& $4.22 \times 10^{-2}$ 
			& $4.15 \times 10^{-2}$ \\
			
			GaussCInvExpInter   
			& $\mathbf{4.77 \times 10^{-10}}$ 
			& $\mathbf{4.81 \times 10^{-10}}$ 
			& $9.52 \times 10^{-7}$
			& $9.41 \times 10^{-7}$
			& $7.57 \times 10^{-5}$
			& $7.56 \times 10^{-5}$ 
			& $3.27 \times 10^{-2}$ 
			& $3.20 \times 10^{-2}$ \\
			\bottomrule
		\end{tabular}
	\end{adjustbox}
	\caption{Average performance of the selected neural networks across $25$ independent runs on the dataset $(A,B)\in [0,10^{-5}]\times [10^{-5},0.18]$ (trained with the MSRE loss function minimised using mini-batch Adam with adaptive learning-rate.)}
	\label{tab:loss_sgd_var_DS3small}
\end{table}

	Unlike the heat maps of the previous datasets, the error patterns in Figure~\ref{fig:heatmap_AB_MSRE_DS3small_0_1e5_1e5_0p18_500} are almost entirely horizontal. This observation confirms that the dependence of the inverse map on the log-moneyness is negligible over the considered range and that the approximation problem is driven primarily by the price variable. The fact that the proposed architectures continue to outperform the standard neural network even in this near one-dimensional setting demonstrates that their effectiveness is not solely due to the adaptive partitioning of the $A$-domain.

	\begin{figure}[h!]
	\includegraphics[width=\linewidth]{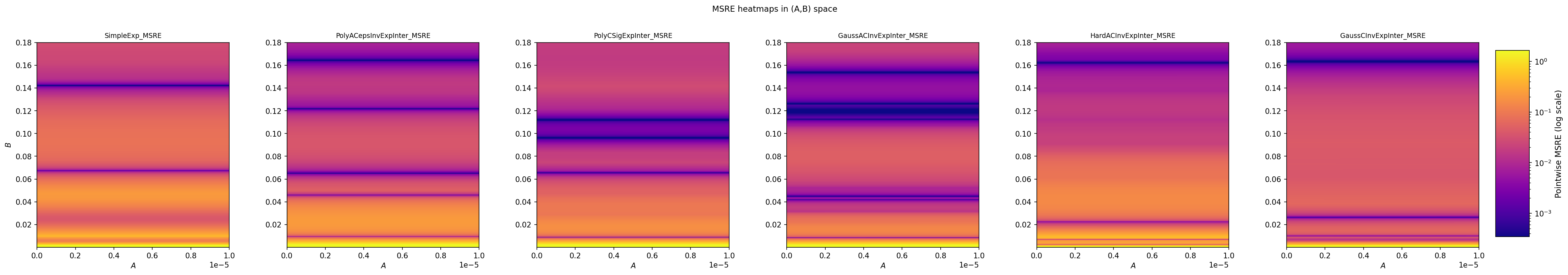}
	\centering
	\caption{Pointwise relative error, on a logarithmic scale, of the chosen architectures over the domain $(A,B)\in [0,10^{-5}]\times [10^{-5},0.18]$. The evaluation grid contains $500\times500$ points.}
	\label{fig:heatmap_AB_MSRE_DS3small_0_1e5_1e5_0p18_500}
	\end{figure}
	
	As in the previous experiments, we use the neural-network predictions as initial guesses for the third-order Householder scheme. Figure~\ref{fig:log_convergence_Householder_2_iter_DS3small_A_5em6} shows the resulting convergence behaviour for $A=5.10^{-6}$ and $B\in [10^{-5},0.18]$. All architectures, including the standard feed-forward network, lie sufficiently close to the exact solution for the Householder iterations to converge rapidly. Nevertheless, the asymptotic architectures consistently provide more accurate initial guesses and therefore achieve slightly better accuracy after refinement. 	The corresponding statistics are reported in Table~\ref{tab:stats_Householder_2_iter_DS3small_A_5em6}. The differences between neural-network architectures become relatively small after refinement, indicating that all considered models provide initial approximations lying well within the convergence region of the Householder method. 
	
	\begin{figure}[h!]
	\includegraphics[width=\linewidth]{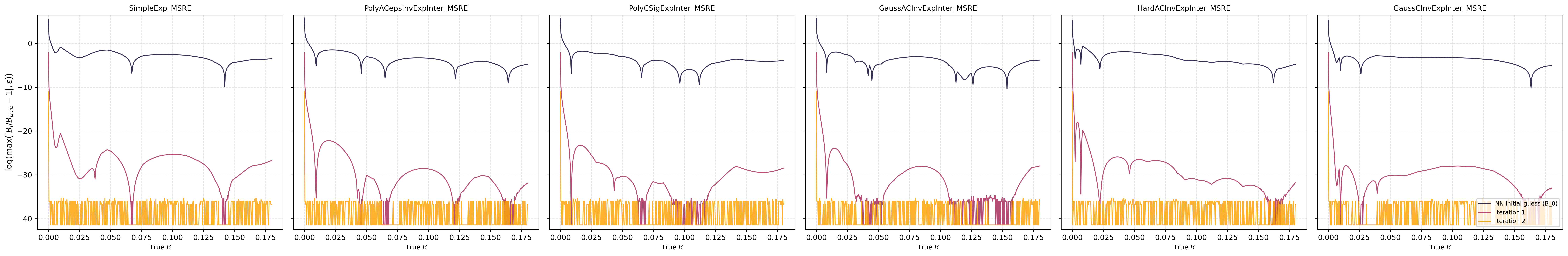}
	\centering
	\caption{Evolution of the logarithmic relative error \eqref{eq:LogRelError} for a fixed value $A=5.10^{-6}$ over the interval $B\in [10^{-5},0.18]$. The curves correspond to the neural-network output ($B_0$), after one Householder iteration, and after two Householder iterations. $B$ is sampled uniformly over a grid of $500$ points.}
	\label{fig:log_convergence_Householder_2_iter_DS3small_A_5em6}
	\end{figure}
	
	\begin{table}[htbp]
  \centering
  \begin{tabular}{lccc}
    \hline
    Model & Avg Log Error & Std Log Error & Max Log Error \\
    \hline
    SimpleExp            & $-38.70$ & $2.92$ & $-10.948$ \\
    PolyACepsInvExpInter     & $-39.06$ & $2.88$ & $-10.925$ \\
    PolyCSigExpInter         & $-39.07$ & $2.92$ & $-10.926$ \\
    GaussACInvExpInter       & $\mathbf{-39.40}$ & $\mathbf{2.85}$ & $-10.932$ \\
    HardACInvExpInter        & $-38.62$ & $2.91$ & $\mathbf{-10.957}$ \\
    GaussCInvExpInter        & $-38.94$ & $2.91$ & $-10.953$ \\
    \hline
  \end{tabular}
    \caption{Average, standard deviation and the maximum of the logarithm of the relative error \eqref{eq:LogRelError} for the chosen architectures after $2$ Householder iterations with $A=5.10^{-6}$ and $B\in [10^{-5},0.18]$. $B$ is sampled uniformly over a grid of $500$ points.}
    \label{tab:stats_Householder_2_iter_DS3small_A_5em6}
\end{table}

	Finally, Figure~\ref{fig:heatmap_log_rel_err_brent_DS3small_0_1e5_1e5_0p18_200} compares the proposed methodology with Brent's method. The most striking feature of this experiment is the behaviour of Brent's algorithm. Unlike in the previous datasets, where it could be argued that Brent's method remained somewhat competitive on average, it now performs poorly across a substantial fraction of the parameter space. The deterioration is visible almost uniformly throughout the domain and highlights the numerical difficulties associated with implied-volatility inversion in very low-volatility or low-log-moneyness regimes. By contrast, the neural-network-plus-Householder pipeline remains stable and highly accurate over the entire grid. These results suggest that the proposed methodology is not merely competitive with classical root-finding methods, but may in fact be more reliable in some of the most numerically challenging regions of the Black-Scholes parameter space.

	\begin{figure}[h!]
	\includegraphics[width=\linewidth]{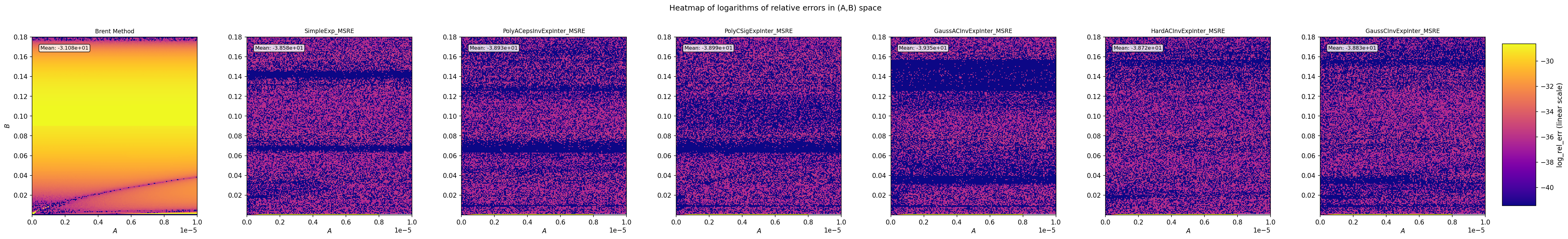}
	\centering
	\caption{Comparison of the logarithm of the relative error \eqref{eq:LogRelError} between Brent's method (maximum of $50$ iterations) and the proposed neural-network-plus-Householder approach on the domain $(A,B)\in [0,10^{-5}]\times [10^{-5},0.18]$. The evaluation grid contains $200\times 200$ points.}
	\label{fig:heatmap_log_rel_err_brent_DS3small_0_1e5_1e5_0p18_200}
	\end{figure}

\section{Conclusion and possible improvements}\label{sec:Conclusion}
	In this paper, we introduced a new family of neural-network architectures for the computation of Black-Scholes implied volatilities. The proposed approach is motivated by a detailed analysis of the Black-Scholes pricing function and, in particular, by the existence of distinct asymptotic and central regimes in the implied-volatility inversion problem. This novel construction, based on the decomposition \eqref{eq:IVGenFormula} and incorporating trainable gating mechanisms and specialised local approximations, embeds analytical knowledge of the Black-Scholes formula directly into the architecture while remaining fully data-driven. Following the seminal work \cite{Jackel}, we further refined the obtained neural approximations with a third-order Householder root-finding scheme. Thus, the resulting methodology combines the speed of machine-learning approximations with the accuracy of high-order iterative solvers.
	
	The numerical experiments demonstrate that the proposed architectures consistently outperform standard feed-forward neural networks across a broad range of parameter domains. We observe improvements of several orders of magnitude in both mean and worst-case relative errors, despite the use of substantially smaller networks. Moreover, the proposed architectures generalise remarkably well, with training and testing performances remaining nearly identical throughout the experiments.
	
	One of the most notable findings of this work is that the outputs generated by the proposed networks are already sufficiently accurate to place the Householder algorithm directly within its region of rapid convergence. Across all datasets considered, the proposed architectures reduced the test MSRE by between one and four orders of magnitude relative to standard feed-forward neural networks while requiring only two Householder iterations to attain accuracies close to machine precision. The neural-network component may therefore not necessarily be viewed as a replacement for classical root-finding techniques, but rather as an efficient mechanism for producing exceptionally good initial guesses.
	
	The numerical study also provides practical guidance regarding architectural choices. Among all the considered models, architectures belonging to the Gaussian family of gating functions generally achieved the best balance between accuracy, robustness and consistency across datasets. In particular, the GaussAC architectures proved especially effective whenever the range of log-moneyness values was sufficiently large for the geometry of the asymptotic regions to vary significantly with $A$. Consequently, if a single architecture had to be recommended as a default choice, our experiments suggest starting with a Gaussian AC architecture combined with the interpolation formulation \eqref{eq:InterFormula}.
	
	The results also highlight the importance of incorporating problem-specific inductive biases into neural-network design. While standard universal approximation results guarantee that sufficiently large feed-forward networks can approximate the implied-volatility function arbitrarily well, the proposed architectures achieve substantially better accuracy with significantly fewer parameters by exploiting the known analytical structure of the problem.

	Several directions for future work naturally emerge from this study. For instance, the local approximations $g_0$ and $g_1$ could be enriched. The asymptotic regions of the implied-volatility function possess well-understood qualitative properties and characteristic shapes that are not fully exploited by the generic neural networks used in this work. One promising direction would be to replace the current local approximations by parametrisations that explicitly reproduce the known asymptotic behaviour of the implied volatility in the limits $C\to0^+$ and $C\to1^-$. 
	
	Finally, while the present work focuses exclusively on the inversion of the Black-Scholes formula, the underlying philosophy is considerably more general. Many calibration problems in quantitative finance involve inverse mappings exhibiting multiple regimes, strong non-linearities and difficult asymptotic behaviour. Examples include local-volatility, stochastic-volatility and rough-volatility models. Extending the proposed regime-aware architectures to such settings appears particularly promising.
	
	\bibliographystyle{plain}
	\bibliography{references}

@article{AKP,
  title={Neural networks with asymptotics control},
  author={Antonov, Alexandre and Konikov, Michael and Piterbarg, Vladimir},
  journal={Available at SSRN 3544698},
  year={2020}
}

@article{AL,
  title={Asymptotics for exponential L{\'e}vy processes and their volatility smile: survey and new results},
  author={Andersen, Leif and Lipton, Alexander},
  journal={International Journal of Theoretical and Applied Finance},
  volume={16},
  number={01},
  pages={1350001},
  year={2013},
  publisher={World Scientific}
}

@book{AS,
     TITLE = {Handbook of mathematical functions with formulas, graphs, and
              mathematical tables},
    SERIES = {A Wiley-Interscience Publication},
    EDITOR = {Abramowitz, Milton and Stegun, Irene A.},
      NOTE = {Reprint of the 1972 edition,
              Selected Government Publications},
 PUBLISHER = {John Wiley \& Sons, Inc., New York; John Wiley \& Sons, Inc.,
              New York},
      YEAR = {1984},
     PAGES = {xiv+1046},
      ISBN = {0-471-80007-4},
   MRCLASS = {00A20 (00A22 65-00)},
  MRNUMBER = {757537},
}

@article{ATV,
  title={Deep smoothing of the implied volatility surface},
  author={Ackerer, Damien and Tagasovska, Natasa and Vatter, Thibault},
  journal={Advances in Neural Information Processing Systems},
  volume={33},
  pages={11552--11563},
  year={2020}
}

@article{Brent,
  title={An algorithm with guaranteed convergence for finding a zero of a function},
  author={Brent, Richard P.},
  journal={The computer journal},
  volume={14},
  number={4},
  pages={422--425},
  year={1971},
  publisher={Oxford University Press}
}

@article{BCS,
  title={Computing the Black-Scholes implied volatility: Generalization of a simple formula},
  author={Bharadia, MAJ and Christofides, N and Salkin, GR},
  journal={Advances in futures and options research},
  volume={8},
  pages={15--30},
  year={1995},
  publisher={JAI PRESS INC.}
}

@article{BCWR,
  title={Variational Bayesian mixture of experts models and sensitivity analysis for nonlinear dynamical systems},
  author={Baldacchino, Tara and Cross, Elizabeth J and Worden, Keith and Rowson, Jennifer},
  journal={Mechanical Systems and Signal Processing},
  volume={66},
  pages={178--200},
  year={2016},
  publisher={Elsevier}
}

@article{BS,
  title={The pricing of options and corporate liabilities},
  author={Black, Fischer and Scholes, Myron},
  journal={Journal of political economy},
  volume={81},
  number={3},
  pages={637--654},
  year={1973},
  publisher={The University of Chicago Press}
}

@article{BS2,
  title={A simple formula to compute the implied standard deviation},
  author={Brenner, Menachem and Subrahmanyan, Marti G},
  journal={Financial Analysts Journal},
  volume={44},
  number={5},
  pages={80--83},
  year={1988},
  publisher={Taylor \& Francis}
}

@article{Chance,
  title={A generalized simple formula to compute the implied volatility},
  author={Chance, Don M},
  journal={Financial Review},
  volume={31},
  number={4},
  pages={859--867},
  year={1996},
  publisher={Wiley Online Library}
}

@article{Cybenko,
  title={Approximation by superpositions of a sigmoidal function},
  author={Cybenko, George},
  journal={Mathematics of control, signals and systems},
  volume={2},
  number={4},
  pages={303--314},
  year={1989},
  publisher={Springer}
}

@article{CM,
  title={A note on a simple, accurate formula to compute implied standard deviations},
  author={Corrado, Charles J and Miller Jr, Thomas W},
  journal={Journal of Banking \& Finance},
  volume={20},
  number={3},
  pages={595--603},
  year={1996},
  publisher={Elsevier}
}

@article{CN,
  title={An improved approach to computing implied volatility},
  author={Chambers, Donald R and Nawalkha, Sanjay K},
  journal={Financial Review},
  volume={36},
  number={3},
  pages={89--100},
  year={2001},
  publisher={Wiley Online Library}
}

@article{DG,
  title={Shape preserving piecewise rational interpolation},
  author={Delbourgo, Roger and Gregory, John A},
  journal={SIAM journal on scientific and statistical computing},
  volume={6},
  number={4},
  pages={967--976},
  year={1985},
  publisher={SIAM}
}

@article{Dupire,
  title={Pricing with a smile},
  author={Dupire, Bruno and others},
  journal={Risk},
  volume={7},
  number={1},
  pages={18--20},
  year={1994}
}

@article{DVPP,
  title={Machine learning for option pricing: an empirical investigation of network architectures},
  author={Della Corte, Serena and Van Mieghem, Laurens and Papapantoleon, Antonis and Papazoglou-Hennig, Jonas},
  journal={The European Journal of Finance},
  pages={1--31},
  year={2026},
  publisher={Taylor \& Francis}
}

@misc{FG,
      title={Deeply Learning Derivatives}, 
      author={Ryan Ferguson and Andrew Green},
      year={2018},
      eprint={1809.02233},
      archivePrefix={arXiv},
      primaryClass={q-fin.CP},
      url={https://arxiv.org/abs/1809.02233}, 
}

@article{GHMP,
	title={The Chebyshev method for the implied volatility},
	author={Glau, Kathrin and Herold, Paul and Madan, Dilip B and P{\"o}tz, Christian},
	journal={arXiv preprint arXiv:1710.01797},
	year={2017}
}

@article{GG,
  title={Pricing and hedging derivative securities with neural networks and a homogeneity hint},
  author={Garcia, Ren{\'e} and Gen{\c{c}}ay, Ramazan},
  journal={Journal of Econometrics},
  volume={94},
  number={1-2},
  pages={93--115},
  year={2000},
  publisher={Elsevier}
}

@article{GU,
  title={Approximating {M}ills ratio},
  author={Gasull, Armengol and Utzet, Frederic},
  journal={Journal of Mathematical Analysis and Applications},
  volume={420},
  number={2},
  pages={1832--1853},
  year={2014},
  publisher={Elsevier}
}

@article{Hernandez,
	title={Model calibration with neural networks},
	author={Hernandez, Andres},
	journal={Available at SSRN 2812140},
	year={2016}
}

@article{HLP,
  title={A nonparametric approach to pricing and hedging derivative securities via learning networks},
  author={Hutchinson, James M and Lo, Andrew W and Poggio, Tomaso},
  journal={The journal of Finance},
  volume={49},
  number={3},
  pages={851--889},
  year={1994},
  publisher={Wiley Online Library}
}

@article{HMT,
  title={Deep learning volatility: a deep neural network perspective on pricing and calibration in (rough) volatility models},
  author={Horvath, Blanka and Muguruza, Aitor and Tomas, Mehdi},
  journal={Quantitative Finance},
  volume={21},
  number={1},
  pages={11--27},
  year={2021},
  publisher={Taylor \& Francis}
}

@article{Hornik,
  title={Approximation capabilities of multilayer feedforward networks},
  author={Hornik, Kurt},
  journal={Neural networks},
  volume={4},
  number={2},
  pages={251--257},
  year={1991},
  publisher={Elsevier}
}

@inproceedings{Householder,
  title={The numerical treatment of a single nonlinear equation},
  author={Alston S. Householder},
  year={1970},
  url={https://api.semanticscholar.org/CorpusID:117523261}
}

@inproceedings{HZRS,
  title={Deep residual learning for image recognition},
  author={He, Kaiming and Zhang, Xiangyu and Ren, Shaoqing and Sun, Jian},
  booktitle={Proceedings of the IEEE conference on computer vision and pattern recognition},
  pages={770--778},
  year={2016}
}

@article{Jackel,
	title={Let's be rational},
	author={J{\"a}ckel, Peter},
	journal={Wilmott},
	volume={2015},
	number={75},
	pages={40--53},
	year={2015},
	publisher={Wiley Online Library}
}

@article{JJNH,
  title={Adaptive mixtures of local experts},
  author={Jacobs, Robert A and Jordan, Michael I and Nowlan, Steven J and Hinton, Geoffrey E},
  journal={Neural computation},
  volume={3},
  number={1},
  pages={79--87},
  year={1991},
  publisher={MIT Press}
}

@article{Li,
  title={Approximate inversion of the Black--Scholes formula using rational functions},
  author={Li, Minqiang},
  journal={European Journal of Operational Research},
  volume={185},
  number={2},
  pages={743--759},
  year={2008},
  publisher={Elsevier}
}

@article{LOB,
  title={Pricing options and computing implied volatilities using neural networks},
  author={Liu, Shuaiqiang and Oosterlee, Cornelis W and Bohte, Sander M},
  journal={Risks},
  volume={7},
  number={1},
  pages={16},
  year={2019},
  publisher={MDPI}
}

@article{Merton,
  title={Theory of rational option pricing},
  author={Merton, Robert C and others},
  year={1971},
  publisher={World Scientific}
}

@article{Mills,
  title={Table of the ratio: area to bounding ordinate, for any portion of normal curve},
  author={Mills, John P},
  journal={Biometrika},
  pages={395--400},
  year={1926},
  publisher={JSTOR}
}

@book{Rokach,
  title={Pattern classification using ensemble methods},
  author={Rokach, Lior},
  volume={75},
  year={2010},
  publisher={World Scientific}
}

@article{Tehranchi,
  title={Asymptotics of implied volatility far from maturity},
  author={Tehranchi, Michael R},
  journal={Journal of Applied Probability},
  volume={46},
  number={3},
  pages={629--650},
  year={2009},
  publisher={Cambridge University Press}
}

@article{Tehranchi2,
  title={Uniform bounds for Black--Scholes implied volatility},
  author={Tehranchi, Michael R},
  journal={SIAM Journal on Financial Mathematics},
  volume={7},
  number={1},
  pages={893--916},
  year={2016},
  publisher={SIAM}
}

@article{VC,
  title={VolGAN: a generative model for arbitrage-free implied volatility surfaces},
  author={Vuleti{\'c}, Milena and Cont, Rama},
  journal={Applied Mathematical Finance},
  volume={31},
  number={4},
  pages={203--238},
  year={2024},
  publisher={Taylor \& Francis}
}

@misc{SGS,
      title={Highway Networks}, 
      author={Rupesh Kumar Srivastava and Klaus Greff and Jürgen Schmidhuber},
      year={2015},
      eprint={1505.00387},
      archivePrefix={arXiv},
      primaryClass={cs.LG},
      url={https://arxiv.org/abs/1505.00387}, 
}

@article{MK,
	title={The calculation of implied variances from the Black-Scholes model: A note},
	author={Manaster, Steven and Koehler, Gary},
	journal={The Journal of Finance},
	volume={37},
	number={1},
	pages={227--230},
	year={1982},
	publisher={JSTOR}
}

@article{BHMST,
	title={On deep calibration of (rough) stochastic volatility models},
	author={Bayer, Christian and Horvath, Blanka and Muguruza, Aitor and Stemper, Benjamin and Tomas, Mehdi},
	journal={arXiv preprint arXiv:1908.08806},
	year={2019}
}

@book{Gulisashvili,
	title={Analytically tractable stochastic stock price models},
	author={Gulisashvili, Archil},
	year={2012},
	publisher={Springer Science \& Business Media}
}

@article{CKT,
	title={A generative adversarial network approach to calibration of local stochastic volatility models},
	author={Cuchiero, Christa and Khosrawi, Wahid and Teichmann, Josef},
	journal={Risks},
	volume={8},
	number={4},
	pages={101},
	year={2020},
	publisher={MDPI}
}

@article{RW,
  title={Neural networks for option pricing and hedging: a literature review},
  author={Ruf, Johannes and Wang, Weiguan},
  journal={Journal of Computational Finance},
  volume={24},
  number={1},
  pages={1--46},
  year={2020}
}

@article{SS,
  title={DGM: A deep learning algorithm for solving partial differential equations},
  author={Sirignano, Justin and Spiliopoulos, Konstantinos},
  journal={Journal of computational physics},
  volume={375},
  pages={1339--1364},
  year={2018},
  publisher={Elsevier}
}

@article{SS2,
  title={Stochastic gradient descent in continuous time},
  author={Sirignano, Justin and Spiliopoulos, Konstantinos},
  journal={SIAM Journal on Financial Mathematics},
  volume={8},
  number={1},
  pages={933--961},
  year={2017},
  publisher={SIAM}
}

@article{YT,
  title={Monotonicity results and new bounds for the {M}ills ratio},
  author={Yang, Zhen-Hang and Tian, Jing-Feng},
  journal={Statistical Papers},
  volume={66},
  number={1},
  pages={25},
  year={2025},
  publisher={Springer}
}

@article{NZW,
	title={Computing Volatility Surfaces using Generative Adversarial Networks with Minimal Arbitrage Violations},
	author={Na, Andrew and Zhang, Meixin and Wan, Justin},
	journal={arXiv preprint arXiv:2304.13128},
	year={2023}
}

\appendix
	\section{Accurate implementation of the Black-Scholes pricing formula}\label{ap:AsymptoticBS}
	The direct implementation of the formula \eqref{eq:BSpriceCompact} may suffer from several well-known numerical difficulties, including underflow, overflow and, most importantly, catastrophic subtractive cancellation. Such phenomena become particularly pronounced when the total volatility is small. In this regime, the price is obtained as the difference between two nearly equal quantities, resulting in a potentially large loss of relative accuracy. To overcome this difficulty, we follow an approach inspired by \cite{Jackel} and reformulate the pricing function in terms of the Gaussian Mills ratio. This representation isolates the numerical challenges into a single difference term for which accurate Taylor and asymptotic expansions can be derived.
	
	We define the function $Z$ by
	\[
	Z(t)
	= \frac{\Phi(-t)}{\varphi(t)},
	\]
	where $\Phi$ and $\varphi$ denote the cumulative distribution function and density function of a standard Gaussian random variable, respectively. $Z$ is known as the Gaussian Mills ratio, a widely studied object in probability (e.g., see \cite{Mills, GU, YT}.) With this definition, and introducing the variables $h=\frac{A}{B}$ and $t=\frac{B}{2}$, the normalised Black-Scholes price may be written as
	\begin{equation}\label{eq:BSpriceMills}
	C_{\mathrm{BS}}(A,B)
	=\frac{1}{\sqrt{2\pi}}
	\exp\left(-\frac{1}{2}\left(h-t \right)^2 \right)
	\left(Z\left(h-t \right)-Z\left(h+t \right)\right).
	\end{equation}
	
	Compared with \eqref{eq:BSpriceCompact}, the representation \eqref{eq:BSpriceMills} isolates the loss of precision into the computation of the difference $Z(h-t)-Z(h+t)$. When $t$ is small, this difference can be evaluated accurately by means of a Taylor expansion around $h$, assuming that $Z$ and its derivatives at $h$ can be computed precisely or are not too small, which is the case when $h$ is relatively small. On the other hand, when $h$ is large, the Mills ratio and its derivatives become extremely small, making an asymptotic expansion more appropriate. The remainder of this appendix develops both approximations and explains how they are combined in practice.
	\subsection{A Taylor expansion of the call price in the small-$t$ regime}
	When $t$ is small, the quantity $Z(h-t)-Z(h+t)$ is affected by subtractive cancellation. To preserve relative accuracy, we replace the direct evaluation of this difference by a Taylor expansion of $Z$ around $h$. Expanding up to order $N$, we write
	\[
	Z(h+t)=\sum_{k=0}^{N}Z^{(k)}(h)\frac{t^k}{k!}+ Z^{(N+1)}(h+\theta_t)\frac{t^{N+1}}{(N+1)!},
	\]
	where $\theta_t$ denotes a number strictly between $0$ and $t$. Hence
	\[
	Z(h+t)-Z(h-t)=\sum_{0\leq 2k+1 \leq N}2Z^{(2k+1)}(h)\frac{t^{2k+1}}{(2k+1)!}
				+\left(Z^{(N+1)}(h+\theta_t)+(-1)^{N}Z^{(N+1)}(h+\theta_{-t})\right)\frac{t^{N+1}}{(N+1)!}.
	\]
	Denote by $J_NZ(h,t)$ the above polynomial Taylor approximation of the difference $Z(h+t)-Z(h-t)$
	\[
	J_NZ(h,t):= \sum_{0\leq 2k+1 \leq N}2Z^{(2k+1)}(h)\frac{t^{2k+1}}{(2k+1)!}.
	\]
	The derivatives appearing above can be easily computed. In fact, $Z$ satisfies
	\[
	\forall x\colon\quad 
	Z'(x)=-1+xZ(x).
	\]
	Hence, the sequence $Z^{(n)}(x)$ satisfies the following recurrence
	\[
	Z^{(n+1)}(x)= x Z^{(n)}(x)+ n Z^{(n-1)}(x).
	\]
	This recurrence allows all derivatives of $Z$ to be computed recursively once the value of $Z(x)$ is known. In practice, the latter can be evaluated with high relative accuracy through the scaled complementary error function $\erfcx$ as follows
	\[
	Z(x)=\sqrt{\frac{\pi}{2}} \erfcx\left(\frac{x}{\sqrt{2}} \right).
	\]
	We do not attempt a detailed study of the analytical properties of the Mills ratio, as an extensive literature already exists on the subject. Instead, we derive a simple error estimate sufficient for selecting the truncation order used in our implementation. For this purpose, it is easy to show that (bound valid for $x\geq 0$ only)
	\begin{equation}\label{eq:MillsDerivativeBound}
	(-1)^nZ^{(n)}(x)
	=e^{\frac{x^2}{2}}\int_0^{\infty}s^ne^{-\frac{(s+x)^2}{2}}\drm s
	\leq e^{\frac{x^2}{2}}\int_0^{\infty}s^ne^{-\frac{s^2}{2}}\drm s
	= e^{\frac{x^2}{2}} 2^{\frac{n-1}{2}}\Gamma\left(\frac{n+1}{2} \right),	
	\end{equation}
	with $\Gamma$ being the classical Gamma function. Hence $|Z^{(n)}(x)| \leq e^{\frac{x^2}{2}} 2^{\frac{n-1}{2}}\Gamma\left(\frac{n+1}{2} \right)$. We recall that, for $n=2p+1$ odd, then 
	\[\Gamma\left(\frac{n+1}{2} \right) = \Gamma\left(p+1 \right)= p!.\]
	If $|t|\leq t_{\min} \leq h$, then (with $N=2p$)
	\[
	\left|Z(h+t)-Z(h-t)-J_{2p}Z(h,t) \right|
	\leq 
	2^{p+1} \frac{p!}{(2p+1)!} e^{\frac{(t_{\min}+ h)^2}{2}} t_{\min}^{2p+1} 
	=e^{\frac{(t_{\min}+ h)^2}{2}} t_{\min}^{2p+1} \prod_{k=1}^p\frac{2}{p+k}.
	\]
	Recall that Inequality \eqref{eq:MillsDerivativeBound} is only sharp for small values of $x\geq 0$ ($x$ corresponding here to $h+\theta_t$ or $h+\theta_{-t}$.) Taking for instance $N=16$ and $t<0.1$, the error term is bounded from above by $\epsilon = 10^{-20}$ as long as $h<3.8023$. For high values of $h$, the bound we used above is not necessarily sharp.

	The preceding discussion suggests a natural range of validity for the Taylor approximation: $t$ should be sufficiently small to control the truncation error, while $h$ should remain moderate so that both $Z(h)$ and its derivatives can be evaluated accurately. In the implementation used throughout this paper, the Taylor expansion is employed whenever $h\leq 35$, a threshold beyond which the asymptotic expansion described below becomes more efficient and numerically reliable.
	\subsection{An asymptotic expansion of the call price in the large-$h$ regime}
	We recall the following classical approximation result.
	\begin{theo}{\cite[(26.2.12)]{AS}}\label{theo:MillsAsymptote} For $t>0$ and $N\in \Nbb^*$, one has
	\[
	\Phi(-t)
	= \frac{\varphi(t)}{t}\left(
	1+ \sum_{n=1}^N \frac{(-1)^n(2n-1)!!}{t^{2n}}
	\right) 
	+ R_N(t),
	\]
	with
	\[
	R_N(t) = (-1)^{N+1}(2N+1)!!\int_{t}^{\infty}\frac{\varphi(u)}{u^{2N+2}}\drm u.
	\]
	We have the trivial bound $|R_N(t)|\leq \frac{(2N+1)!!}{t^{2N+2}}$. 
	\end{theo}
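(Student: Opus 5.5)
The plan is to derive the expansion by repeated integration by parts on $\Phi(-t)=\int_t^\infty \varphi(u)\,\drm u$, using the identity $\varphi'(u)=-u\varphi(u)$. First I will establish a one-step recursion. For $k\geq 0$, set
\[
I_k(t):=\int_t^\infty \frac{\varphi(u)}{u^{2k}}\,\drm u,
\]
so that $I_0(t)=\Phi(-t)$. Writing $\varphi(u)u^{-2k}=(-\varphi'(u))\,u^{-2k-1}$ and integrating by parts gives
\[
I_k(t)=\frac{\varphi(t)}{t^{2k+1}}-(2k+1)\,I_{k+1}(t).
\]
The boundary term at infinity vanishes because $\varphi$ decays faster than any power. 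All integrals converge since $t>0$.

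Second, I will iterate this identity by induction on $N$. The claim to prove is
\[
\Phi(-t)=\sum_{n=0}^{N}(-1)^n(2n-1)!!\,\frac{\varphi(t)}{t^{2n+1}}+(-1)^{N+1}(2N+1)!!\,I_{N+1}(t),
\]
with the convention $(-1)!!=1$. The case $N=0$ is the recursion with $k=0$. For the induction step, I substitute the recursion for $I_{N+1}$ into the remainder. This produces the new term $(-1)^{N+1}(2N+1)!!\,\varphi(t)/t^{2N+3}$ and the new remainder $(-1)^{N+2}(2N+1)!!(2N+3)\,I_{N+2}(t)=(-1)^{N+2}(2N+3)!!\,I_{N+2}(t)$. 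Factoring $\varphi(t)/t$ out of the sum gives exactly the stated form, and the remainder is exactly $R_N(t)$.

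Third, I will prove the bound. Since $u\geq t$ on the domain of integration, $u^{-2N-2}\leq t^{-2N-2}$. Also $\int_t^\infty\varphi\leq 1$ (in fact $\leq \tfrac12$). Together these give $|R_N(t)|\leq (2N+1)!!\,t^{-2N-2}$.

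There is no real obstacle in this argument. The only points needing care are the bookkeeping of signs and double factorials in the induction, and justifying that the boundary terms at infinity vanish.

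Two optional remarks can follow. A sharper bound is available: $I_{N+1}(t)\leq t^{-2N-3}\int_t^\infty u\varphi(u)\,\drm u=\varphi(t)/t^{2N+3}$. Also, the sign alternation of $R_N$ yields \eqref{eq:GaussCumulIneq} as the cases $N=0$ and $N=1$.
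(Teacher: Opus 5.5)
Your proof is correct and is the standard one. The paper gives no argument of its own: it simply cites Abramowitz--Stegun (26.2.12), whose expansion is exactly this repeated integration by parts, and the ``trivial bound'' is precisely your third step. Your optional sharper estimate $|R_N(t)|\le (2N+1)!!\,\varphi(t)/t^{2N+3}$ is worth keeping. It is what actually underlies the estimate on $\widehat{R}_N=R_N/\varphi$ used later in the appendix, whereas the trivial bound becomes useless once divided by $\varphi(t)$.
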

	Applying Theorem \ref{theo:MillsAsymptote} to our case, we get the following expansion
	\begin{prop}\label{prop:AsympExpPrice} For $h>0$, $|t|<h$ and $N_1, N_2\in \Nbb^*$
	\[
	Z(h+t)- Z(h-t) = -2 \sum\limits_{n=0}^{N_1}  
			\sum\limits_{1\leq 2p+1\leq N_2} \frac{\gamma_{2p,n}}{h^{2n+1}}
			\left(\frac{t}{h}\right)^{2p+1} 
			+ \varepsilon_{1,N_1,N_2} + \varepsilon_{2,N_1}.
	\]
	where
		\[
	\forall n,k \geq 0\colon\quad
	\gamma_{k,n} =\left\{
	\begin{array}{ll}
	1&\text{if }n=0,\\
	(-1)^n(2n-1)!! {2n+k+1 \choose 2n}&\text{if }n\geq 1, \\
	\end{array}
	\right.
	\]
	and
	\[
	\varepsilon_{1,N_1,N_2} = \sum\limits_{n=0}^{N_1}  \frac{\gamma_{N_2,n}}{h^{2n+1}}
		\left(\frac{(-1)^{N_2+1}}{(1+\theta_{t/h})^{2n+N_2+2}}-\frac{1}{(1+\theta_{-t/h})^{2n+N_2+2}}\right)\left(\frac{t}{h}\right)^{N_2+1},
	\]
	where, $\theta_x$ is a real number strictly between $0$ and $x$, and
	\[
	\varepsilon_{2,N_1} = \widehat{R}_N(h+t)-\widehat{R}_N(h-t).
	\]
	with 
	\[
	\widehat{R}_N(t)
	= \frac{R_N(t)}{\varphi(t)}
	=(-1)^{N+1}(2N+1)!!\int_{t}^{\infty}\frac{\varphi(u)}{\varphi(t) u^{2N+2}}\drm u.
	\]
	\end{prop}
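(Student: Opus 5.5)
Divide the expansion of Theorem~\ref{theo:MillsAsymptote} by $\varphi(x)$. For any $x>0$ and the chosen $N=N_1$ this gives
\[
Z(x)=\frac{1}{x}+\sum_{n=1}^{N_1}\frac{(-1)^n(2n-1)!!}{x^{2n+1}}+\widehat R_{N_1}(x)
=\sum_{n=0}^{N_1}\frac{c_n}{x^{2n+1}}+\widehat R_{N_1}(x),
\]
where $c_0=1$ and $c_n=(-1)^n(2n-1)!!$. Since $|t|<h$, both $h+t$ and $h-t$ are positive, so the identity applies at both points. Subtracting the two instances shows that the remainder contributions produce exactly $\varepsilon_{2,N_1}=\widehat R_{N_1}(h+t)-\widehat R_{N_1}(h-t)$. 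It then remains to expand
\[
\sum_{n=0}^{N_1}\frac{c_n}{h^{2n+1}}\Bigl[(1+u)^{-(2n+1)}-(1-u)^{-(2n+1)}\Bigr],\qquad u=\frac{t}{h}\in(-1,1).
\]

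For each fixed $n$, apply Taylor's theorem with Lagrange remainder of order $N_2$ to $\psi_n(x)=(1+x)^{-(2n+1)}$ at $x=\pm u$ around $0$. The derivatives are
\[
\psi_n^{(k)}(0)/k!=(-1)^k\binom{2n+k}{k}.
\]
In the difference $\psi_n(u)-\psi_n(-u)$ the even powers cancel, and the odd powers double. This leaves $-2\sum_{1\le 2p+1\le N_2}\binom{2n+2p+1}{2p+1}u^{2p+1}$.

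Next, identify the coefficient. The quantity $c_n\binom{2n+2p+1}{2p+1}$ should equal $\gamma_{2p,n}$, since $\binom{2n+k+1}{2n}=\binom{2n+k+1}{k+1}$ with $k=2p$. I will check that the index convention $\binom{2n+k+1}{2n}$ indeed matches $\binom{2n+2p+1}{2p+1}$ for $k=2p$, and it does. For $n=0$ the binomial equals $1$, which is consistent with $\gamma_{k,0}=1$.

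The Lagrange remainders are $\psi_n^{(N_2+1)}(\theta)\,(\pm u)^{N_2+1}/(N_2+1)!$, where
\[
\psi_n^{(N_2+1)}(x)/(N_2+1)!=(-1)^{N_2+1}\binom{2n+N_2+1}{N_2+1}(1+x)^{-(2n+N_2+2)}.
\]
Taking the difference of the $+u$ and $-u$ remainders gives
\[
\binom{2n+N_2+1}{2n}\,u^{N_2+1}\Bigl[(-1)^{N_2+1}(1+\theta_u)^{-(2n+N_2+2)}-(1+\theta_{-u})^{-(2n+N_2+2)}\Bigr].
\]
Multiplying by $c_n/h^{2n+1}$ gives exactly the $n$-th term of $\varepsilon_{1,N_1,N_2}$, with $c_n\binom{2n+N_2+1}{2n}=\gamma_{N_2,n}$.

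The main obstacle is purely bookkeeping: tracking the signs $(-1)^{N_2+1}$ from the two expansion points and matching the binomial index in $\gamma_{k,n}$ (where $k$ plays the role of the remainder order $N_2$ or of $2p$). I would verify the small case $N_2=1$ explicitly to confirm the sign convention in $\varepsilon_{1,N_1,N_2}$. Summing over $n$ and collecting the main terms, the remainder $\varepsilon_1$ and the remainder $\varepsilon_2$ yields the stated identity.
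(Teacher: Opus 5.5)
Your proposal is correct and follows the paper's proof essentially step for step: divide the Mills expansion of Theorem~\ref{theo:MillsAsymptote} by $\varphi$ and apply it at $h\pm t$, factor out $h^{2n+1}$, Taylor-expand $(1+x)^{-(2n+1)}$ with Lagrange remainder and keep the odd part, then identify the paper's coefficient $\alpha_n\beta_{k,n}$ (with $\beta_{k,n}=\binom{2n+k+1}{k+1}$) as $\gamma_{k,n}$. Your version is slightly more careful than the paper's: it allows $t<0$ through $u\in(-1,1)$ and fixes $N=N_1$ in $\widehat R_N$ explicitly; the only caveat is that the intermediate points $\theta_{\pm t/h}$ really depend on $n$, a harmless imprecision you share with the statement itself.
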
	
	\begin{proof} 
	Using Theorem \ref{theo:MillsAsymptote}, we write, for $t>0$ and $N_1\in \Nbb^*$
	\[
	Z(h+t) 
	=\sum\limits_{n=0}^{N_1} \frac{\alpha_n}{(h+t)^{2n+1}}
	+ \widehat{R}_{N_1}(h+t)
	= \sum\limits_{n=0}^{N_1} \frac{\alpha_n}{h^{2n+1}} \frac{1}{(1+\frac{t}{h})^{2n+1}}
	+ \widehat{R}_{N_1}(h+t),\]
	where
	\[
	\alpha_n=\left\{
	\begin{array}{ll}
	1&\text{if }n=0,\\
	(-1)^n(2n-1)!!&\text{if }n\geq 1.\\
	\end{array}
	\right. 
	\]
	If we denote $f_n: x> -1 \mapsto \frac{1}{(1+x)^n}$, then its successive derivatives are given by
	\[
	\forall k\in \Nbb^*\colon,\;\; \forall x>0: \quad 
	f_n^{(k)}(x)=\frac{(-1)^k n(n+1)\cdots (n+k-1)}{(1+x)^{n+k}}.
	\]
	Taylor's expansion of $f_n$ around $0$ to the order $N_2\in \Nbb^*$ gives, for $x>-1$
	\[
	\frac{1}{(1+x)^n} = \sum_{k=0}^{N_2} \frac{(-1)^k n(n+1)\cdots (n+k-1)}{k!} x^k 
						+\frac{(-1)^{N_2+1} n(n+1)\cdots (n+N_2)}{(N_2+1)!(1+\theta_x)^{n+N_2+1}} x^{N_2+1},
	\]
	with $\theta_x$ denoting a number strictly between $0$ and $x$. This implies, for $|x|<1$
	\[\begin{array}{rcl}
	\frac{1}{(1+x)^n}-\frac{1}{(1-x)^n} 
	&=& -2\sum\limits_{1\leq 2p+1\leq N_2} \frac{n(n+1)\cdots (n+2p)}{(2p+1)!} x^{2p+1} \\
	&& +\frac{n(n+1)\cdots (n+N_2)}{(N_2+1)!} x^{N_2+1}
		(\frac{(-1)^{N_2+1}}{(1+\theta_x)^{n+N_2+1}}-\frac{1}{(1+\theta_{-x})^{n+N_2+1}}).	
	\end{array}
	\]
	In particular, this gives, for $n\in\inti{0}{N_1}$
	\[\begin{array}{rcl}
	\frac{1}{(1+x)^{2n+1}}-\frac{1}{(1-x)^{2n+1}} 
	&=& -2\sum\limits_{1\leq 2p+1\leq N_2} \frac{(2n+1)\cdots (2n+1+2p)}{(2p+1)!} x^{2p+1} \\
	&& +\frac{(2n+1)\cdots (2n+1+N_2)}{(N_2+1)!} x^{N_2+1}
		(\frac{(-1)^{N_2+1}}{(1+\theta_x)^{2n+N_2+2}}-\frac{1}{(1+\theta_{-x})^{2n+N_2+2}}).	
	\end{array}
	\]
	In summary, this gives, for $t<h$
	\[\begin{array}{rcl}
	Z(h+t)- Z(h-t)
	&=&-2 \sum\limits_{n=0}^{N_1} 
			\sum\limits_{1\leq 2p+1\leq N_2} \frac{\alpha_n}{h^{2n+1}}  \beta_{2p,n}
			(\frac{t}{h})^{2p+1}\\
	&& + \sum\limits_{n=0}^{N_1}  \frac{\alpha_n}{h^{2n+1}}
			\beta_{N_2,n}
		(\frac{(-1)^{N_2+1}}{(1+\theta_{t/h})^{2n+N_2+2}}-\frac{1}{(1+\theta_{-t/h })^{2n+N_2+2}})(\frac{t}{h})^{N_2+1}\\
	&& + \widehat{R}_N(h+t)-\widehat{R}_N(h-t),\\
	\end{array}
	\]
	where 
	\[
	\beta_{k,n}=\frac{(2n+1)\cdots (2n+1+k)}{(k+1)!}.
	\]
	The proof is now complete.
	\end{proof}
	We assume $|\frac{t}{h}|\leq \tau < 1$ and $h>h_\urm (=35)$. We have on the one hand
	\[
	|\varepsilon_{1,N_1,N_2}| 
	\leq 
	\sum\limits_{n=0}^{N_1}  \frac{|\gamma_{N_2,n}|}{h_\urm^{2n+1}}
	(1+\frac{1}{(1- \tau)^{2n+N_2+2}})
	\tau^{N_2+1}.
	\]
	Using the bound $|\widehat{R}_N(x)| \leq  \frac{(2N-1)!!}{t^{2N+1}}$, we get on the other hand
	\[
	|\varepsilon_{2,N_1}| 
	\leq 
	\frac{(2N_1-1)!!}{h_\urm^{2N_1+1}}\left(1+\frac{1}{(1- \tau)^{2N_1+1}}\right).
	\]
	Given an error threshold $\epsilon$, and in order to choose $N_1$ and $N_2$ in practice so that to ensure 
	\[
	|\varepsilon_{1,N_1,N_2}|+|\varepsilon_{2,N_1}|\leq \epsilon,
	\]
	we may first compute a value for $N_1$ such that
		\[
	4 \frac{(2N_1-1)!!}{h_\urm^{2N_1+1}\epsilon}<1,
	\]
	which ensures that we can find $\tau \in (0,1)$ so that $|\varepsilon_{2,N_1}| \leq \epsilon/2$. With this value of $N_1$ fixed, we can compute a maximal value for $\tau$. Finally, we may compute $N_2$ such that $|\varepsilon_{1,N_1,N_2}|\leq \epsilon/2$.\\
	
	For instance, with $\epsilon=10^{-20}$ and $h_\urm=35$, we can take $N_1\in \inti{9}{98}$ (with an overflow happening at the value $99$). Taking $N_1=17$ (as in \cite{Jackel}), we compute $\tau_{\max}=0.6154$ as an acceptable maximum value for $\tau$. Finally, taking $\tau_{\max}=0.1$, we may take $N_2\geq 20$ for example. 
	\subsection{Summary of the used approximations}\label{subsec:SummaryExpansions}	
	We summarise here the computation scheme in two steps:
	\begin{itemize}
	\item For $t<0.1$ and $h<35$, define (with $N=16$)
	\[
	\Delta Z(h,t)= -2 \sum_{1\leq 2k+1 \leq N}Z^{(2k+1)}(h)\frac{t^{2k+1}}{(2k+1)!},
	\]
	where $Z$ and its derivatives are computed recursively as follows
	\[
	Z(x)=\sqrt{\frac{\pi}{2}} \erfcx\left(\frac{x}{\sqrt{2}} \right),
	\quad
	Z'(x)=-1+xZ(x)
	\quad \text{and} \quad
	Z^{(n+1)}(x)= x Z^{(n)}(x)+ n Z^{(n-1)}(x).
	\]
	\item For $h\geq 35$ and $t<\tau_{\max} h$, and with $N_1=17$, $N_2=20$ and $\tau_{\max}=0.1$, define
	\[
	\Delta Z(h,t)=2\sum\limits_{n=0}^{N_1} 
			\sum\limits_{1\leq 2p+1\leq N_2} \frac{\gamma_{2p,n}}{h^{2n+1}}  
			\left(\frac{t}{h}\right)^{2p+1},
	\]
	with $\gamma_{2p,n}$ defined as in Proposition \ref{prop:AsympExpPrice}.
	\item Otherwise, we define
	\[
	\Delta Z(h,t)= Z(h-t)-Z(h+t),
	\]
	(with $Z$ computed using the $\erfcx$ function.)
	\end{itemize}
	Then approximate the price as follows
	\begin{itemize}
	\item If $h-t\leq 35$, use the approximation 
	\[
	C_{\mathrm{BS}}(A,B) = \frac{1}{\sqrt{2\pi}}
	\exp\left(-\frac{1}{2}\left(h-t \right)^2 \right) \Delta Z(h,t).
	\]
	\item Otherwise, compute as follows
		\[
	C_{\mathrm{BS}}(A,B) = \frac{1}{\sqrt{2\pi}}
	\exp\left(-\frac{1}{2}\left(h-t \right)^2 +\log \Delta Z(h,t)\right).
	\]
	\end{itemize}
	In Figures \ref{fig:TaylorOscillations_t_fixed} and \ref{fig:TaylorOscillations_h_fixed}, we show the difference in the evaluation of the price by naively using the formula \eqref{eq:BSpriceCompact} or one of the expansions given above.
	\begin{figure}[h!]
  \centering
  \begin{subfigure}{0.49\textwidth}
    \centering
    \adjustbox{valign=c}{\includegraphics[width=\linewidth]{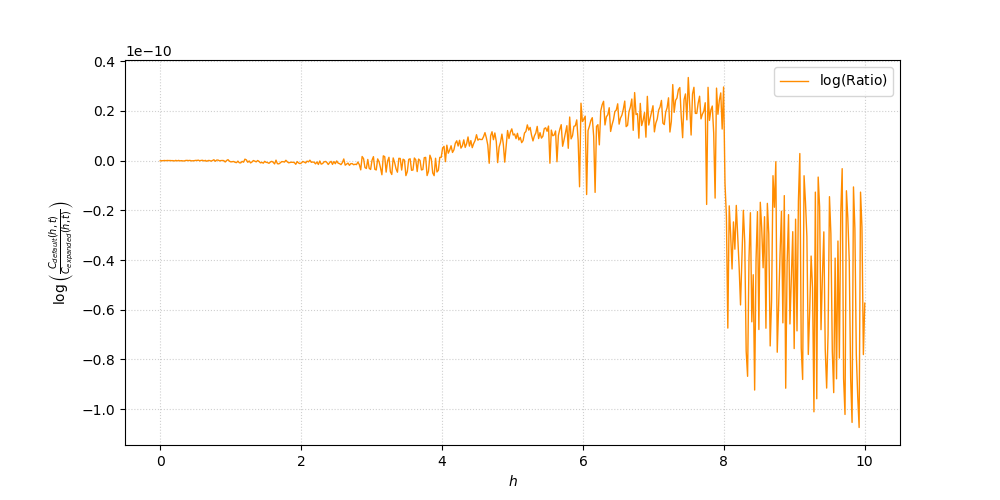}}
    \caption{ }
  \end{subfigure}
  \begin{subfigure}{0.49\textwidth}
    \centering
    \adjustbox{valign=c}{\includegraphics[width=\linewidth]{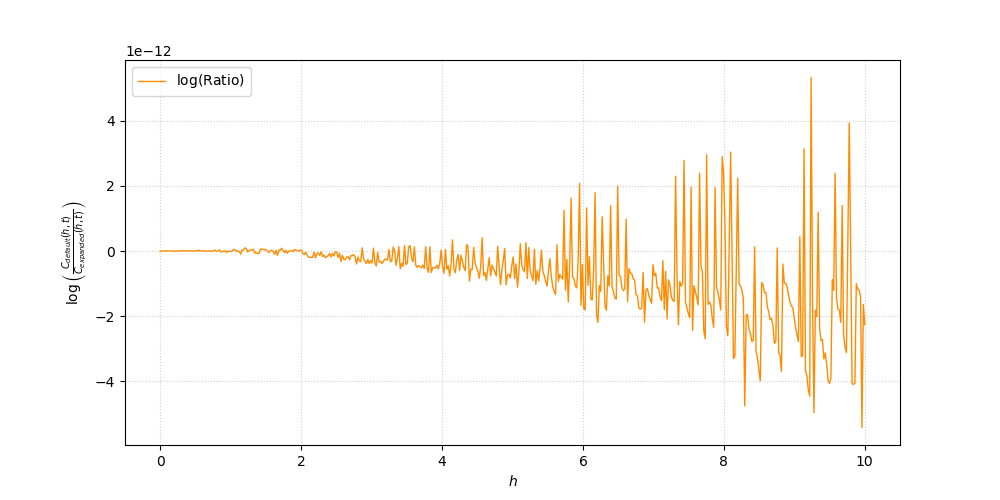}}
    \caption{ }
  \end{subfigure}\\
  \begin{subfigure}{0.49\textwidth}
    \centering
    \adjustbox{valign=c}{\includegraphics[width=\linewidth]{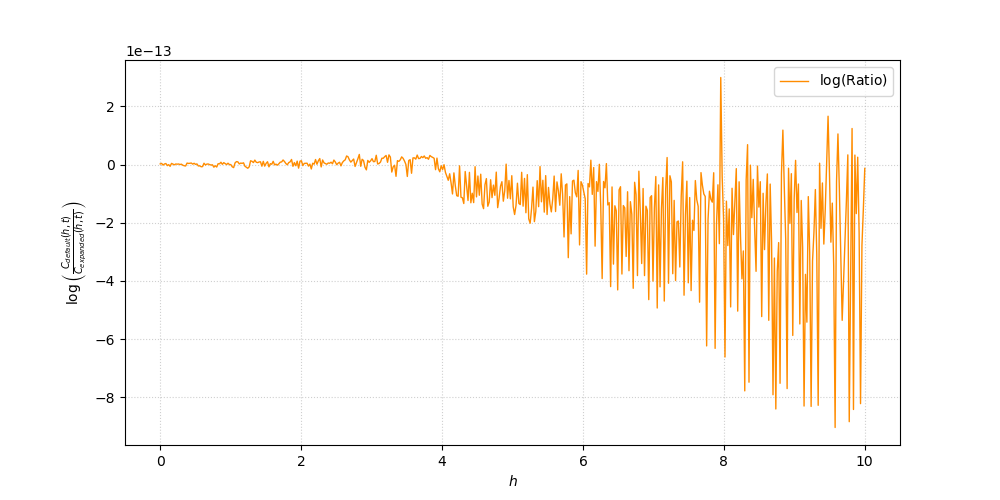}}
    \caption{ }
  \end{subfigure}
  \begin{subfigure}{0.49\textwidth}
    \centering
    \adjustbox{valign=c}{\includegraphics[width=\linewidth]{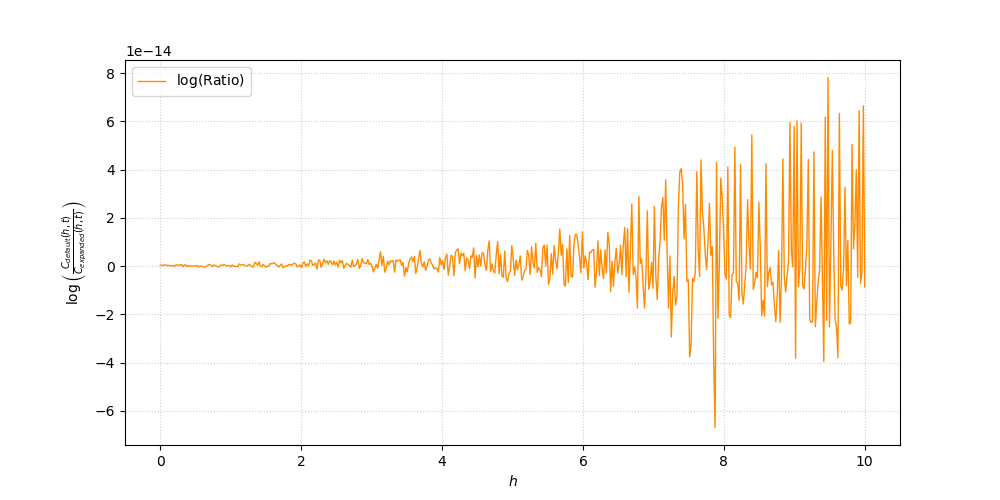}}
    \caption{ }
   \end{subfigure}
  \caption{The logarithm of the ratio between
 the direct evaluation of the Black-Scholes formula \eqref{eq:BSpriceCompact} (denoted here $C_{default}(h, t)$) 
 and the Mills-ratio-based implementation \eqref{eq:BSpriceMills}(denoted $C_{expanded}(h, t)$), 
 with varying $h$ and fixed $t$.
 (A) $t=0.001$, (B) $t=0.01$, (C) $t=0.1$ and (D) $t=1$. 
 The magnitudes of the oscillations caused by subtractive cancellation decay as $t$ increases.
 }
  \label{fig:TaylorOscillations_t_fixed}
\end{figure}
	\begin{figure}[h!]
  \centering
  \begin{subfigure}{0.49\textwidth}
    \centering
    \adjustbox{valign=c}{\includegraphics[width=\linewidth]{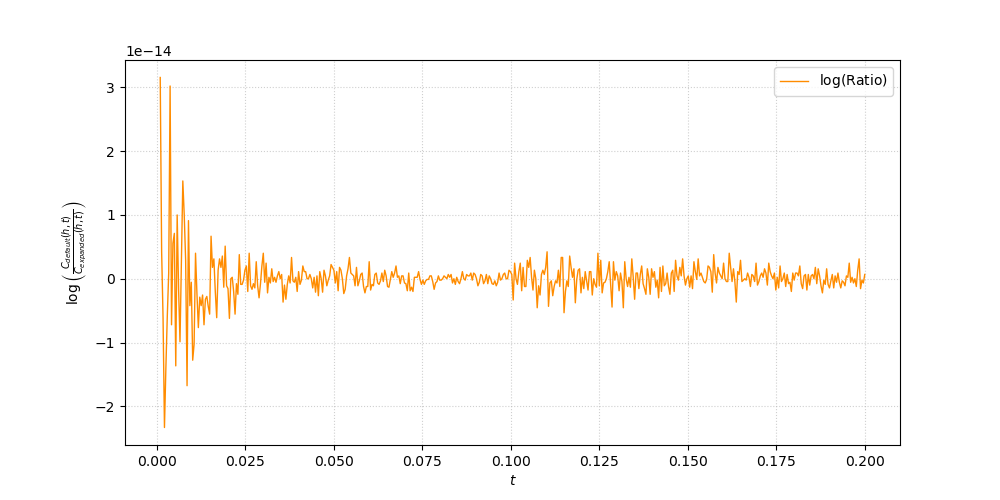}}
    \caption{ }
  \end{subfigure}
  \begin{subfigure}{0.49\textwidth}
    \centering
    \adjustbox{valign=c}{\includegraphics[width=\linewidth]{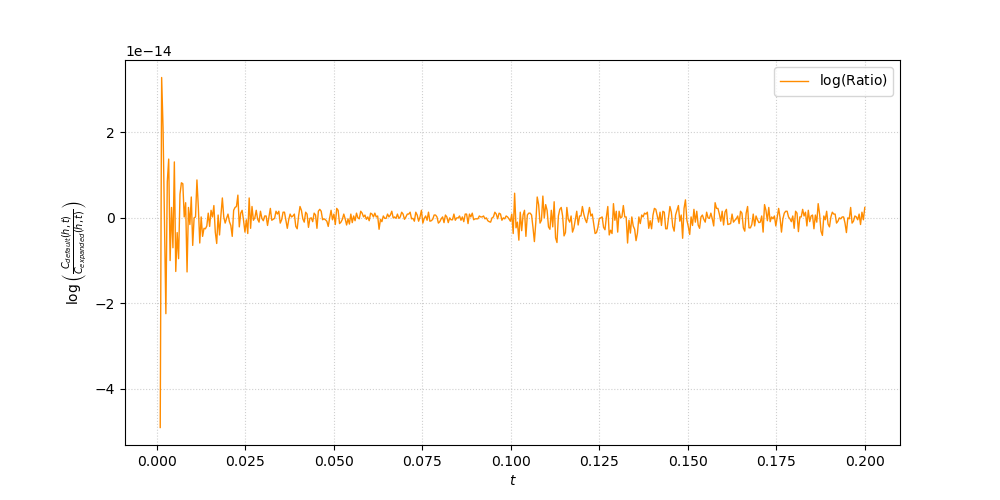}}
    \caption{ }
  \end{subfigure}\\
  \begin{subfigure}{0.49\textwidth}
    \centering
    \adjustbox{valign=c}{\includegraphics[width=\linewidth]{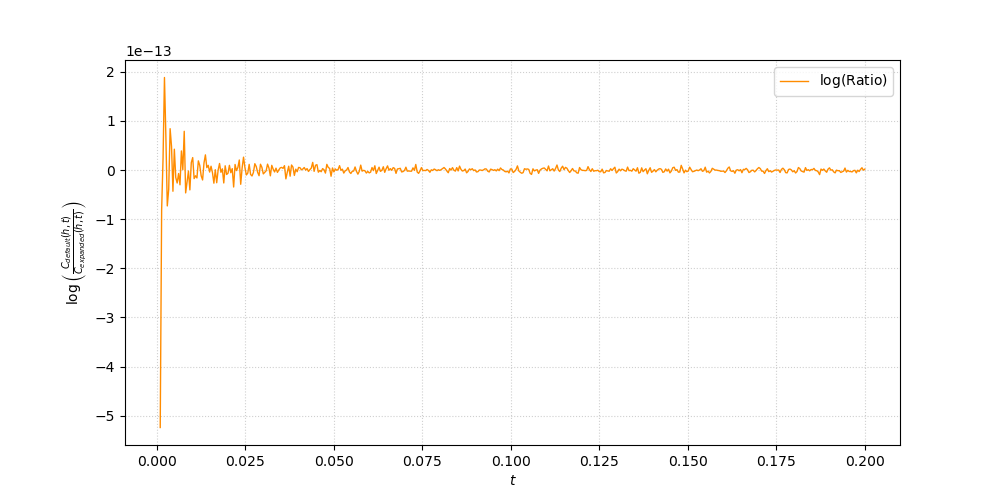}}
    \caption{ }
  \end{subfigure}
  \begin{subfigure}{0.49\textwidth}
    \centering
    \adjustbox{valign=c}{\includegraphics[width=\linewidth]{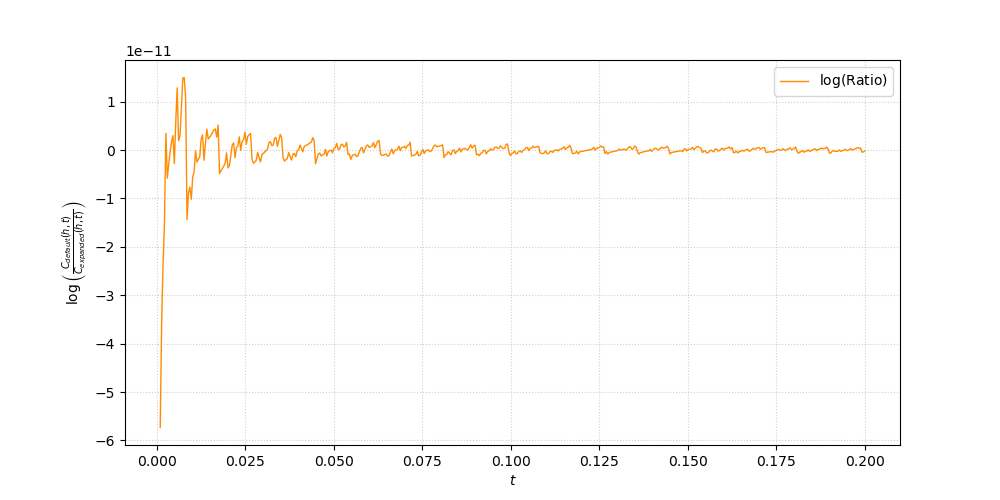}}
    \caption{ }
   \end{subfigure}
  \caption{
  The logarithm of the ratio between
 the direct evaluation of the Black-Scholes formula \eqref{eq:BSpriceCompact} (denoted here $C_{default}(h, t)$) 
 and the Mills-ratio-based implementation \eqref{eq:BSpriceMills}(denoted $C_{expanded}(h, t)$), 
 with varying $t$ and fixed $h$.
 (A) $h=0.01$, (B) $h=0.1$, (C) $h=1$ and (D) $h=10$. 
 The magnitudes of the oscillations decay as $t$ increases. Larger values of $h$ lead to an earlier disappearance of the oscillations.}
  \label{fig:TaylorOscillations_h_fixed}
\end{figure}
	\section{Householder root-finding method for the implied volatility}\label{ap:Householder}
	As discussed in Section~\ref{sec:NumericalExp}, the neural-network architectures developed in this paper are used both as stand-alone approximators of the implied volatility and as highly accurate initial guesses for a subsequent iterative refinement procedure. The refinement step is based on the third-order Householder method \cite{Householder}, also known as the cubic Householder iteration. This method generalises the classical Newton method (order $1$) and Halley's method (order $2$).
	
	Given a sufficiently smooth objective function $g$, the goal is to determine a root $B^\ast$ satisfying $g(B^\ast) = 0$. The third–order Householder iteration for solving such an equation can be compactly written as
	\begin{equation}\label{eq:HouseholderCompact}
	B_{n+1}
	= B_n
	+ \nu \,
	\frac{ 1 +  \frac{1}{2} \nu h_{2} }
	{ 1 + \nu\left( h_{2} + \frac{1}{6}\nu h_{3} \right) },	
	\end{equation}
	where we use the shorthand notation
	\[
	\nu := -\frac{g}{g'}(B_n), \qquad
	h_{2} := \frac{g''}{g'}(B_n), \qquad
	h_{3} := \frac{g'''}{g'}(B_n).
	\]
	
	The purpose of this appendix is to derive explicit expressions for these quantities in the context of Black-Scholes implied-volatility inversion and to explain how different objective functions are employed in the asymptotic regimes identified in Section~\ref{sec:BSAnalysis}. But before we carry on, let us first introduce some very useful notation. 
	
	Throughout this appendix, all functions are viewed as functions of the total volatility variable $B>0$, and the dependence on $B$ is omitted whenever no ambiguity arises. More explicitly, given a function $f$ depending on $B$, we will write $f$ instead of $f(B)$. If the function $f$ is smooth, and for all $i,j\in \Nbb$, we write $f_{i,j}:=\frac{f^{(i)}}{f^{(j)}}$ (assuming $f^{(j)}(B)\neq 0$.) 
	
	For a fixed value of the log-moneyness $A\geq 0$, we denote $C=C_{\mathrm{BS}}(A,B)$. As in the previous appendix, we use the variables $h=\frac{A}{B}$ and $t=\frac{B}{2}$.
	
	Recall that the normalised Black-Scholes price may be written as
	\[
			C = \Phi(u) - e^A \, \Phi(v), \qquad\text{with }
			u = -\frac{A}{B} + \frac{B}{2}, \quad
			v = -\frac{A}{B} - \frac{B}{2}.
	\]
	Since $v^2-u^2=2A$, we have $\varphi(u)=e^A \varphi(v)$. Then, with $u$ and $v$ seen as functions of $B$, one trivially has
	\[
	C'=(u'-v')\varphi(u)=\varphi(u).
	\]
	Hence
	\[
	C''=u'\varphi'(u)=-uu'\varphi(u)
	\quad \text{and} \quad
	C'''=\varphi(u) ((uu')^2-uu''-(u')^2).
	\]
	Consequently
	\begin{equation}\label{eq:C_21C_31}
	C_{2,1}=\frac{C''}{C'}=-uu'
	\quad \text{and} \quad
	C_{3,1}=\frac{C'''}{C'}=(uu')^2-uu''-(u')^2.	
	\end{equation}
	Moreover,
	\[
	u'=\frac{A}{B^2} + \frac{1}{2}
	\quad \text{and} \quad
	u''=-2\frac{A}{B^3}.
	\]
	Thus, for given $A>0$ and a target price $0<C_0<1$ in the central region (which corresponds to a relatively moderate value of the implied volatility $B$), it is numerically safe to directly use the natural objective function $g_0: B\mapsto C_{\mathrm{BS}}(A,B)-C_0$ with the recursive scheme \eqref{eq:HouseholderCompact} and the explicit formulae obtained above. More explicitly
		\[
	\nu = -\frac{C-C_0}{\varphi(u)}
		\quad ,\quad
	h_{2} =  -uu'
		\quad \text{and} \quad
	h_{3} =(uu')^2-uu''-(u')^2.
	\]
	
	 When $B\to 0^+$, both $u$ and $v$ tend to $-\infty$ and all of the quantities $C, C', C'', C'''$ converge to $0$. In this regime, the direct residual $g_0$ becomes numerically unstable. Following \cite{Jackel}, we therefore replace it by the alternative residual
	 \[ g_\lrm(B) = \frac{1}{\log C(B)} - \frac{1}{\log C_0}. \] 
	 This residual is employed whenever the given price satisfies
	 \[ C_0 \le C_\lrm= C_{\mathrm{BS}}(A,B_\lrm(A)). \]
	For a given price $C_0\leq C_\lrm$, a direct computation gives
		\begin{align*}
			g_\lrm' &= -\frac{C'}{C} \frac{1}{(\log C)^2}, \\
			g_\lrm'' &= -\frac{C''}{C} \frac{1}{(\log C)^2} + \frac{C'^2}{C^2} \frac{1+\frac{2}{\log C}}{(\log C)^2}, \\
			g_\lrm''' &= -\frac{C'''}{C} \frac{1}{(\log C)^2} 
		+ \frac{C'}{C} \left[  3 \frac{C''}{C} \frac{1+\frac{2}{\log C}}{(\log C)^2} -  2\frac{ C'^2}{C^2} \frac{1+\frac{3}{\log C} \,(1+\frac{1}{\log C})}{(\log C)^2} \right].
		\end{align*}
		Thus, for the objective function $g_\lrm$, and denoting $L=\log C$, we have
			\begin{align*}
			\nu &= \frac{(\log C_0 - L)}{\log C_0} \frac{C}{C'}L, \\
			h_2 &= \frac{C''}{C'} - \frac{C'}{C} \left(1 + \frac{2}{L}\right), \\
			h_3 &= \frac{C'''}{C'} 
					+ 2 \frac{C'^2}{C^2} \left(1 + \frac{3}{L} \left(1+\frac{1}{L}\right)\right)
					- 3\frac{C''}{C} \left(1+\frac{2}{L}\right).
		\end{align*}
		
	To evaluate these quantities efficiently, we use the expansions and notation introduced in Appendix~\ref{ap:AsymptoticBS}. We first note that
	\[
	C_{0,1}= \frac{C}{C'} = Z(h-t)-Z(h+t).
	\]
	Analysing again depending on the values of $h$ and $t$, we compute or approximate the above quantity using $\Delta Z(h,t)$ (introduced in the previous appendix and whose formulae are summarised in Subsection \ref{subsec:SummaryExpansions}). Accordingly, we have
	\[
			\nu = \frac{\log C_0 - L}{\log C_0} C_{0,1} L
			,\;\;
			h_2 = C_{2,1} - \frac{1 + \frac{2}{L}}{C_{0,1}}
			\text{ and }
			h_3 = C_{3,1} 
			+\frac{2 + \frac{6}{L} \,(1+\frac{1}{L})}{C_{0,1}^2}  - 3\left(1+\frac{2}{L}\right) \frac{C_{2,1}}{C_{0,1}},
		\]
	with $C_{2,1}$ and $C_{3,1}$ given by Equation \eqref{eq:C_21C_31} and $L$ and $C_{0,1}$ approximated by
	\[
	L\simeq\log (\Delta Z(h,t)) - \frac{1}{2}\ln (2\pi) - \frac{(h-t)^2}{2}
	\text{ and }
	C_{0,1}\simeq  \Delta Z(h,t).
	\]
	
	A similar difficulty arises as $B\to+\infty$. In this case, $u \to +\infty$, $v \to -\infty$, $C \to 1$ and the quantities $C', C''$ and $C'''$ converge to $0$. Consequently, the Householder root finding scheme based on the objective function $g_0$ becomes unstable. Similarly to the above, we introduce the alternative residual
		\[
		g_\urm(B) = \log \frac{1-C_0}{1-C(B)},
		\]
		to be used whenever the given price satisfies
		\[ C_0 \geq C_\urm=C_{\mathrm{BS}}(A,B_\urm(A)).\]
	In this case, we directly compute the successive derivatives
	\[
	g' = \frac{C'}{1-C}
	\; ,\;
	g'' =\frac{C''}{1-C} + \frac{(C')^2}{(1-C)^2}
	\; ,\;
	g''' =\frac{C'''}{1-C} + 3\frac{C'C''}{(1-C)^2}+ 2\frac{(C')^3}{(1-C)^3}.
	\]
	Hence
	\[
	\nu=-\log\!\bigg(\frac{1-C_0}{1-C}\bigg) \frac{1-C}{C'}
	\; ,\;
	h_2 = \frac{C''}{C'} + \frac{C'}{1-C}
	\; ,\;
	h_3 = \frac{C'''}{C'} + 3\frac{C''}{1-C}+ 2\frac{(C')^2}{(1-C)^2}.
	\]
	To obtain stable expressions, observe that
	\[
	1-C = 1-\Phi(u) + e^A \, \Phi(v) =\Phi(-u) + e^A \Phi(v).
	\]
	Therefore
	\[
	\widetilde{C}_{0,1} 
	= \frac{1-C}{C'}
	= \frac{\Phi(-u) + e^A \Phi(v)}{\varphi(u)}
	= Z(u) + Z(-v).
	\]
	Since both terms tend to zero for large $B$, direct evaluations of $Z(u)$ and $Z(-v)$ may underflow. We instead compute
	\[
	\log(1-C)=\log\left(\frac{1-C}{C'}\right)+\log(\varphi(u))
	= - \log(2)-\frac{u^2}{2}+\log\left( \mathrm{erfcx}\left( \frac{u}{\sqrt{2} } \right)+ \mathrm{erfcx}\left( \frac{-v}{\sqrt{2} } \right)\right).
	\]
	Defining $\widetilde L = \log(1-C)$, we obtain
	\[
	\nu=(\widetilde{L}-\log(1-C_0))\widetilde{C}_{0,1}
	,\;\;
	h_2 = C_{2,1}+\frac{1}{\widetilde{C}_{0,1}} 
	\text{ and }
	h_3 = C_{3,1} + 3\frac{C_{2,1}}{\widetilde{C}_{0,1}}+ 2\frac{1}{\widetilde{C}_{0,1}^2} ,
	\]
	with $C_{2,1}$ and $C_{3,1}$ given by Equation \eqref{eq:C_21C_31} and 
	\[
	\widetilde{L} = \log(1-C)= - \log(2)-\frac{u^2}{2}+\log\left( \mathrm{erfcx}\left( \frac{u}{\sqrt{2} } \right)+ \mathrm{erfcx}\left( \frac{-v}{\sqrt{2} } \right)\right),
	\]
	and 
	\[
	\widetilde{C}_{0,1} = \sqrt{\frac{\pi}{2}} \left( \mathrm{erfcx}\left( \frac{u}{\sqrt{2} } \right)+ \mathrm{erfcx}\left( \frac{-v}{\sqrt{2} } \right)\right).
	\]

\end{document}